\documentclass[twocolumn,superscriptaddress,prd,amsmath,amsfonts,nofootinbib,showpacs]{revtex4-1}

\newcommand{\ud}{\text{d}}
\newcommand{\half}{{\tfrac{1}{2}}}
\newcommand{\gb}{\check{g}}
\newcommand{\bs}{\begin{split}}
\newcommand{\es}{\end{split}}
\newcommand{\delb}{\check{\nabla}}

\newcommand{\mc}[1]{\mathcal{#1}}
\newcommand{\bc}[1]{\check{\mathcal{#1}}}
\newcommand{\TT}{\textsc{tt}}
\newcommand{\hb}{\bar{h}}

\usepackage{color}
\definecolor{bwblue}{rgb}{0,0,0.7}
\usepackage{hyperref}
\hypersetup{
    colorlinks,
    citecolor=bwblue,
    filecolor=bwblue,
    linkcolor=bwblue,
    urlcolor=bwblue,
    pdftitle={Localising the Energy and Momentum of Linear Gravity},
    pdfauthor={Luke M. Butcher}}
\usepackage{slashed}
\usepackage{graphicx}
\usepackage{subfigure}
\usepackage{psfrag}
\usepackage{amsthm}
\theoremstyle{plain}
\newtheorem*{Theorem}{Theorem}

\begin{document}

\title{Localising the Energy and Momentum of Linear Gravity}
\author{Luke M. Butcher}
\email[]{l.butcher@mrao.cam.ac.uk}
\affiliation{Astrophysics Group, Cavendish Laboratory, J J Thomson Avenue, Cambridge, CB3 0HE, UK}
\affiliation{Kavli Institute for Cosmology, Madingley Road, Cambridge, CB3 0HA, UK}
\author{Michael Hobson}
\affiliation{Astrophysics Group, Cavendish Laboratory, J J Thomson Avenue, Cambridge, CB3 0HE, UK}
\author{Anthony Lasenby}
\affiliation{Astrophysics Group, Cavendish Laboratory, J J Thomson Avenue, Cambridge, CB3 0HE, UK}
\affiliation{Kavli Institute for Cosmology, Madingley Road, Cambridge, CB3 0HA, UK}
\date{18 November 2010}
\pacs{04.20.Cv, 04.30.-w}

\begin{abstract}
A framework is developed which quantifies the local exchange of energy and momentum between matter and the linearised gravitational field. We derive the unique gravitational energy-momentum tensor consistent with this description, and find that this tensor only exists in the harmonic gauge. Consequently, nearly all the gauge freedom of our framework is naturally and unavoidably removed. The gravitational energy-momentum tensor is then shown to have two exceptional properties: (a) it is gauge-invariant for gravitational plane-waves, (b) for arbitrary transverse-traceless fields, the energy-density is never negative, and the energy-flux is never spacelike. We analyse in detail the local gauge invariant energy-momentum transferred between the gravitational field and an infinitesimal point-source, and show that these invariants depend only on the transverse-traceless components of the field. As a result, we are led to a natural gauge-fixing program which at last renders the energy-momentum of the linear gravitational field completely unambiguous, and additionally ensures that gravitational energy is never negative nor flows faster than light. Finally, we calculate the energy-momentum content of gravitational plane-waves, the linearised Schwarzschild spacetime (extending to arbitrary static linear spacetimes) and the gravitational radiation outside two compact sources: a vibrating rod, and an equal-mass binary.
\end{abstract}
\maketitle

\section{Introduction}
Half a century ago, a simple argument established that gravitational waves carry energy and can exchange this energy with matter. Often attributed to Feynman (certainly popularised by Bondi \cite{Bondi}) the argument asked us to imagine a gravitational detector comprising a rigid rod along which two ``sticky beads'' are threaded. A passing gravitational wave then acts to alter the proper distance between the beads, and this motion, opposed by friction, heats the detector and thus mediates a transfer of energy from gravity to matter. Despite the simplicity of this idea, even after fifty years, it has not been possible to explain \emph{where} in spacetime this gravitational energy resides, and it is generally accepted that attempts to do so are ``looking for the right answer to the wrong question''\cite{MTW}.

The elusiveness of the ``right answer'', and the wrongness of the question, are very often identified as arising from gravity's gauge freedom, the consequence of which is a one-to-many mapping between physical spacetime and whatever localisation of gravitational energy-momentum might be proposed. Historically this issue was cast in terms of coordinate dependence, and the multitude of non-covariant objects that were constructed (first by Einstein \cite{Einstein}, and most famously by Landau and Lifshitz \cite{LL}) were termed energy-momentum \emph{pseudotensors}. However, a more recent formulation \cite{BG} has made it clear that the construction of a genuine tensor (defined on some background spacetime) is not the central problem; rather, it is the tensor's dependence on the arbitrary diffeomorphism that maps physical spacetime to the background \cite{Butcher08}. 

Nevertheless, there is no reason \emph{a priori} that gauge dependence should preclude the construction of a physically unambiguous tensor, provided we are prepared to remove the gauge freedom in some well-defined way. In cosmology this is frequently done by constructing new variables which are gauge invariant but equal to the relevant gauge-dependent fields (such as gravity or density fluctuations) in a particular gauge \cite{Bardeen,Stewart}; however, it is just as effective to provide a physically unambiguous method by which the gauge may be fixed, and to then insist that the gravitational field be evaluated in this gauge when locating its energy and momentum. Unfortunately, no previous approach has supplied instructions of this nature, and more importantly, neither the construction of these energy-momentum objects, nor their key properties, appear to favour one gauge (or one set of gauge-invariants) over another; thus it appears impossible to justify any of these seemingly arbitrary choices as \emph{natural}.

Besides gauge dependence, there is also a great range of choice over which properties, physical or mathematical, should define the gravitational energy-momentum tensor: should we be guided by a putative conservation law, or have in mind a particular role in the field equations? For instance, it is always possible to locate the energy-momentum of \emph{matter} by measuring the gravity it generates, so one might suggest that gravity's energy-momentum should be localised in a similar fashion, by examining the interaction it has with itself. Following this idea to its conclusion, it  has been shown \cite{Deser, feyngrav,Butcher09,Redux} that general relativity may be constructed from an initially linear (spin-2) field theory that is then systematically coupled to its own (Hilbert) energy-momentum tensor. Sadly, this scheme leads us to identify the non-linear part of the Einstein tensor $G_{ab}-G^{(1)}_{ab}$ as the gravitational energy-momentum, so (a) the gauge problem remains, and (b) the result is additionally ambiguous, as different choices of ``gravitational field'' ($g_{ab}$, $g^{ab}$, $\sqrt{-g}g^{ab}$, etc.)  mix the linear and non-linear terms in $G_{ab}$.

In spite of these various difficulties, one aspect of this enduring problem stands opposed to conventional wisdom and motivates our present discussion: when gravity and matter interact, the \emph{exchange} of energy is local! To see this we need look no further than the sticky bead detector: here, the energy exchange is certainly localised in so far as it takes place only within the confines of the detector. Furthermore, we can imagine a very small detector, much smaller than a wavelength of the incident gravitational radiation, and observe that at each instant a well-defined power is developed in the detector as heat; thus, at least in this case, the rate of energy exchange is associated with a particular point in spacetime. One might hope, therefore, that consistency with this phenomenon would be enough to localise the energy and momentum of the gravitational field outside the detector, or even when no detector is present. Moreover, even if a gravitational energy-momentum tensor could not be found, there would still be great value in constructing a framework for the description and analysis of local gravitational energy-momentum exchange. The purpose of this article is to develop precisely this framework, and to examine the gravitational energy-momentum tensor it brings to light. In doing so we uncover a simple and unambiguous ``right answer'' through which the effects of gravitational energy-momentum may be usefully understood. Conceivably, this was the ``right question'' to ask.

For the sake of simplicity, we have restricted our present discussion to \emph{linearised} general relativity on a flat Minkowski background. It is only in this linear regime that the convenient fiction of a ``gravitational field'' propagating on a background spacetime can be taken seriously, a construction which is essentially unavoidable when localising gravitational energy-momentum.\footnote{As long as there is some spacetime with everywhere vanishing gravitational energy-momentum, then this will naturally play the role of the background, and fluctuations away from this configuration will constitute the gravitational field. Although the most natural choice for this ``ground-state'' is flat spacetime,  this does not necessarily preclude the extension of our formalism to less trivial backgrounds; however, we suspect there may be technical or conceptual problems with ``ignoring'' the energy-momentum of a nontrivial background. In particular, we anticipate issues analogous to those of associating energy-momentum with a fluctuation in the electromagnetic field $\delta F_{ab}$ when the background $\check{F}_{ab}$ is non-zero: the energy-momentum tensor $T\sim \check{F}^2 + \check{F}\delta F+ (\delta F)^2$, so the dominant contribution from the fluctuation will be linear in the field, rather than quadratic.}  On a technical level, the restriction to the linear approximation limits the space of gauge transformations to a manageable size, facilitating the analysis and eventual removal of our description's gauge dependence. Furthermore, our gravitational energy-momentum tensor will not be derived from non-linear terms in the field equations, so we avoid any ambiguity arising from field redefinition. We shall not attempt to extend our results beyond the linear theory at this time.\footnote{Of course, it may not be possible to extend the framework we develop here to the full non-linear theory, and we accept that localising gravitational energy-momentum in this regime (where the distinction between background and fluctuation is virtually meaningless) may be an inherently flawed idea. Of course, this does not alter the validity of our work in the linear case, where the ``field theoretic'' view is justified.}

The structure of the paper will be as follows. We begin by building the foundations of our framework, deriving a gravitational energy-momentum tensor (\ref{taubar}) by demanding consistency with the energy and momentum exchanged with matter. As we will see, most of the tensor's gauge freedom is eliminated immediately as a natural consequence of this derivation. Following this, we demonstrate two important additional properties of our tensor, further solidifying its interpretation as gravity's energy-momentum tensor. We then develop our framework more concretely by analysing the transfer of gravitational energy-momentum onto an infinitesimal detector; in the process of making this analysis gauge invariant, we will purge the last trace of gauge ambiguity from our energy-momentum tensor. Finally, we examine the gravitational energy-momentum in some specific examples. Throughout, we work in units where $c=1$, write $\kappa \equiv 8\pi G$, and use the sign conventions of Wald \cite{Wald}: the metric signature is $(-,+,+,+)$, and the Riemann and Ricci tensors are defined by $[\nabla_c,\nabla_d]v^a \equiv R^{a}_{\phantom{a}bcd}v^b$, and $R_{ab}\equiv R^{c}_{\phantom{c}acb}$.

\section{Motivation and Derivation}\label{Deriv}
The purpose of this section is to explain how, by considering the energy-momentum transferred between matter and gravity, we are led to a formula for the gravitational energy-momentum tensor. We begin by laying down some mathematical groundwork.

\subsection{Preliminaries}
As previously explained, this paper focuses exclusively on \emph{linear} gravity: we only consider physical spacetimes $(\mc{M},g_{ab})$ in which the curvature $R^a_{\phantom{a}bcd}$ is everywhere small. As usual, this allows us to identify the physical spacetime with a \emph{flat} background spacetime $(\check{\mc{M}},\check{g}_{ab})$, where $\check{R}^a_{\phantom{a}bcd}=0$, using a diffeomorphism \raisebox{-1pt}{$\phi: \mc{M} \rightarrow \check{\mc{M}}$}. The ``gravitational field'' $h_{ab}$ is then defined on $\check{\mc{M}}$ by
\begin{align}\label{hdef}
\phi^* g_{ab}= \gb_{ab} + h_{ab},
\end{align}
and we insist that $\phi$ be chosen such that $h_{ab}$ is small everywhere, in order that the \emph{linearised} Einstein field equations are a good approximation:\footnote{We use $O(h^n)$ as an abbreviation of $O((h_{ab})^n)$; this should not be confused with the trace of the gravitational field $h\equiv h_{ab}\gb^{ab}$.}
\begin{align}\label{FEqs}
\widehat{G}_{ab}^{\phantom{ab}cd}h_{cd}= \kappa \check{T}_{ab} + O(h^2).
\end{align}
In the above relation, $\check{T}_{ab}\equiv \phi^* T_{ab}= O(h)$ is the matter energy-momentum tensor $T_{ab}$ mapped onto the background, and 
\begin{align}\nonumber
\widehat{G}_{ab}^{\phantom{ab}cd}h_{cd} &\equiv \delb_c \delb_{(a}h_{b)}^{\phantom{a)}c} - \half \delb^2 h_{ab} - \half\delb_a\delb_b h
\\\label{Gdef}&\quad+ \half\gb_{ab}\left(\delb^2 h - \delb_c\delb_d h^{cd} \right)
\end{align}
is the linearised Einstein tensor $G^{(1)}_{ab}$. Our freedom of choice over $\phi$ will of course give rise to the usual gauge transformation $\delta h_{ab}=\delb_{(a} \xi_{b)}$.

On the background it will be useful to define four vectors\footnote{We use Roman letters as abstract tensor indices \citep[p. 437]{Wald} and Greek letters as numerical indices running from 0 to 3. Tensor indices of fields defined on the background are of course raised and lowered with $\gb_{ab}$.} $\{\check{e}_\mu^{\phantom{\mu}a}\}$ obeying
\begin{align}\label{edef1}
\delb_a \check{e}_\mu^b &=0,\\ \label{edef2}
\check{e}_\mu^{\phantom{\mu}a} \check{e}_{\nu a} &=\eta_{\mu\nu}, 
\end{align}
which form the basis of a Lorentz coordinate system $\{x^\mu\}$ on $\bc{M}$: $\check{e}_\mu^{\phantom{\mu}a}\equiv (\partial_\mu)^a$. From this starting point, we shall define a corresponding set of vector fields $\{e_\mu^{\phantom{\mu}a}\}$ in the physical spacetime,
\begin{align}
e_\mu^{\phantom{\mu}a} \equiv (\phi^{-1})^*\check{e}_\mu^{\phantom{\mu}a},
\end{align}
the behaviour of which will only be determined once we have fixed the gauge $\phi$, an issue to which we will return later.

\subsection{Energy-Momentum Currents}
Superficially, general relativity is a theory in which the energy and momentum of matter is always conserved:
\begin{align}\label{consT}
\nabla^a T_{ab}=0.
\end{align}
However, the sticky bead argument has already demonstrated that this is not the case; in reality, matter may gain (or lose) energy through interaction with the gravitational field. The reason for this apparent contradiction is as follows. In order to determine the energy of each part of the detector, one must first specify a timelike vector field $e_0^{\phantom{0}a}$ (the ``time direction'' conjugate to the energy) with which to form an energy current-density  $J^{a} \equiv T^a_{\phantom{a}b}e_0^{\phantom{0}b}$. The incoming gravitational wave will then prevent $e_0^{\phantom{0}b}$ from satisfying $\nabla_a e_0^{\phantom{0}b}=0$, and we will find that $\nabla_a J^{a} = T^a_{\phantom{a}b}\nabla_a e_0^{\phantom{0}b}\ne 0$. This inequality indicates a mismatch between the energy of the matter flowing into a given point, and the change in energy of the matter at that point; in other words, it represents the appearance of \emph{additional energy} which was not already present in the matter --  this is the energy absorbed from the gravitational wave! What is needed, therefore, is a framework which can account for this gained energy by identifying a corresponding loss in the energy of the gravitational field. We devote the rest of this section to the development of this idea, which will form the basis of our description of gravitational energy-momentum.

Following the previous discussion, it should now be clear that we must define one energy current-density, and three momentum current-densities, by
\begin{align}
J_\mu^{\phantom{\mu}a} \equiv T^a_{\phantom{a}b}e_\mu^{\phantom{\mu}b},
\end{align}
using the vectors $\{e_\mu^{\phantom{\mu}a}\}$ that get mapped to the Lorentz basis of the background. This is a generalisation of the practice of defining conserved currents by contracting $T_{ab}$ with a killing vector in a spacetime with a continuous symmetry. Here, however, the vector fields $\{e_\mu^{\phantom{\mu}a}\}$ only correspond to \emph{approximate} symmetries (present because spacetime is nearly flat) and thus the currents will not be conserved. The real difficulty is choosing sensible behaviour for $\{e_\mu^{\phantom{\mu}a}\}$ that sufficiently captures the ``parallelism'' of killing vectors in the absence of any gravitational symmetry. Because $e_\mu^{\phantom{\mu}a} \equiv (\phi^{-1})^*\check{e}_\mu^{\phantom{\mu}a}$, this question has been recast as a choice of gauge, which we will address later.

Having defined our energy-momentum currents (apart from specifying $\phi$) we are now in a position to express the key idea of our approach. We seek a symmetric tensor field $\tau_{ab}$, defined on the background, that is a quadratic function of the gravitational field $h_{ab}$. We wish to be able to interpret $\tau_{ab}$ as the energy-momentum tensor of the gravitational field, and we shall achieve this by insisting that its non-conservation (in the background) exactly balances the non-conservation of the $J_\mu^{\phantom{\mu}a}$ in the physical spacetime. Specifically, we wish to be able to define gravitational energy-momentum current-densities $j_\mu^{\phantom{\mu}a}$ by
\begin{align}
j_\mu^{\phantom{\mu}a} \equiv \tau^a_{\phantom{a}b}\check{e}_\mu^{\phantom{\mu}b},
\end{align}
such that
\begin{align}\label{conservation}
\delb_a j_\mu^{\phantom{\mu}a} + \phi^*(\nabla_a J_\mu^{\phantom{\mu}a}) =0.
\end{align}
This equation captures the idea that energy-momentum is transferred between matter and the gravitational field. In particular, equation (\ref{conservation}) indicates that knowing the behaviour of $h_{ab}$ at some point will be sufficient to determine the fields $\nabla_a J_\mu^{\phantom{\mu}a}$ that express the local change in energy-momentum of the matter at the corresponding point in the physical spacetime.

We proceed by calculating the two elements of (\ref{conservation}). Because we are using a Lorentz basis in the background, $\delb_a \check{e}_\mu^b =0$ trivially gives
\begin{align}\label{delj}
\delb_a j_\mu^{\phantom{\mu}a} = \check{e}_\mu^{\phantom{\mu}b}\delb_a\tau^a_{\phantom{a}b}.
\end{align}
The second term is a little less trivial; using (\ref{consT}),  
\begin{align}\nonumber
\phi^*(\nabla_a J_\mu^{\phantom{\mu}a})&= \phi^*(T^a_{\phantom{a}b}\nabla_a e_\mu^{\phantom{\mu}b}),\\\nonumber
&= \phi^*T^a_{\phantom{a}b} \phi^* (\nabla_a e_\mu^{\phantom{\mu}b}),\\
&= (\check{T}^a_{\phantom{a}b} + O(h^2))(\delb_a\check{e}_\mu^{\phantom{\mu}b} + \check{e}_\mu^{\phantom{\mu}c}C^b_{\phantom{b}ac}),
\end{align}
where $C^a_{\phantom{a}bc}= \half(\delb_b h_c^{\phantom{c}a}+ \delb_c h_b^{\phantom{b}a}- \delb^a h_{bc}) + O(h^2) $ is the connection between the two derivative operators: $\phi^*(\nabla_a v^b)= \delb_a \phi^* v^b + C^b_{\phantom{b}ac}\phi^* v^c$. Now, because  $\delb_a \check{e}_\mu^b =0$, and $\check{T}_{ab}=\check{T}_{ba}$, we have
\begin{align}\nonumber
\phi^*(\nabla_a J_\mu^{\phantom{\mu}a})&= \half \check{T}^a_{\phantom{a}b}\check{e}_\mu^{\phantom{\mu}c}(\delb_c h_a^{\phantom{a}b}+ \delb_a h_c^{\phantom{c}b}- \delb^b h_{ac}) + O(h^3)\\
&= \half \check{e}_\mu^{\phantom{\mu}c}\check{T}^a_{\phantom{a}b}\delb_c h_a^{\phantom{a}b}+ O(h^3).
\end{align}
Finally, we use the field equations (\ref{FEqs}) to write
\begin{align}\label{delJ}
\phi^*(\nabla_a J_\mu^{\phantom{\mu}a})
&= \frac{1}{2\kappa}\check{e}_\mu^{\phantom{\mu}q}(\widehat{G}_{ab}^{\phantom{ab}cd}h_{cd}) (\delb_q h^{ab}) + O(h^3).
\end{align}
Inserting (\ref{delj}) and (\ref{delJ}) into (\ref{conservation}), and discarding the $O(h^3)$ terms, we arrive at the defining relation of the gravitational energy-momentum tensor:
\begin{align}\label{tdef}
\kappa \delb^a \tau_{aq}= -\half (\delb_q h^{ab}) \widehat{G}_{ab}^{\phantom{ab}cd}h_{cd}.
\end{align} 

The next step will be to use this equation to derive a formula for $\tau_{ab}$ in terms of $h_{ab}$. In order to do so, however, we must make one additional demand: $\tau_{ab}$ will not depend on second derivatives of $h_{ab}$, but will be a function of $\delb_c h_{ab}$ and $\gb_{ab}$ only. The reason we must impose this condition is that equation (\ref{tdef}) can only define $\tau_{ab}$ up to the addition of ``superpotential'' terms, those fields whose divergence vanishes \emph{identically}. Because these terms are of the form $\delb^c\delb^d H_{[ac][bd]}$ (where $H_{[ac][bd]}=H_{[bd][ac]}$ is some function of $h_{ab}$) they necessarily contain second derivatives; thus our restriction on $\tau_{ab}$ is sufficient to remove this ambiguity. At the moment, it might be tempting to view this condition as a convenient way to tame the derivation, and keep in mind that we can always add in super-potentials later if we wish. However, in section \ref{props} it will become clear that many of the interesting properties displayed by $\tau_{ab}$ will be unavoidably spoilt by the addition of such terms. For this reason we will not consider superpotentials further here.

\subsection{Determining the Energy-Momentum Tensor}\label{Determin}
In truth, it will not be possible to construct a symmetric tensor $\tau_{ab}$ that satisfies (\ref{tdef}) for all $h_{ab}$;\footnote{We will shortly describe how to check this assertion, which is simply a property of (\ref{tdef}) and independent of the requirement that $\tau_{ab}$ contain no  second derivatives.} to make progress we will need to impose some condition on $\delb_c h_{ab}$ and specialise to this restricted set of gravitational fields. Although this forced restriction might appear to be a flaw in our formalism, as we shall soon see, it is actually a valuable asset.

There are only three linear conditions we can place on $\delb_c h_{ab}$ which neither introduce extra fields, nor break Lorentz invariance: (a) $\delb_c h_{ab}=0$, (b) $\delb_a h=0$, or (c) $\delb^a h_{ab}=\lambda \delb_b h$, for some constant $\lambda$.\footnote{The only other possibility, $\delb^ah_{ab}=0$, can be achieved by taking (c) with $\lambda=0$.} Condition (a) is obviously far too restrictive: it does not allow us any gravitational field whatsoever. In contrast, condition (b) is not restrictive enough: there is no $\tau_{ab}$ that solves (\ref{tdef}) for all gravitational fields with constant trace.\footnote{For the sake of brevity, we will not prove this assertion here. Instead we will attend to condition (c) and derive the formula for $\tau_{ab}$ that it admits. After we have done so, we invite the reader to perform a similar calculation under condition (b) and verify that no solution exists.} We must therefore focus on condition (c), which we repeat for later reference: 
\begin{align}\label{lambdadef}
\delb^ah_{ab}=\lambda \delb_b h.
\end{align} 
Using this relation, it will be possible to replace any occurrence of $\delb^ah_{ab}$ with $\lambda\delb_b h$; hence the most general formula for a symmetric tensor $\tau_{ab}$, a quadratic function of $\delb_c h_{ab}$, is as follows:
\begin{widetext}
\begin{align}\nonumber
\kappa \tau_{pq}&= \gb_{pq}(A_0 \delb_c h_{ab}\delb^c h^{ab}+ A_1 \delb_a h  \delb^a h + A_2\delb_c h_{ab}\delb^b h^{ac} ) + A_3 \delb_p h_{ab}\delb_q h^{ab}  + A_4 \delb_p h\delb_q h + A_5 \delb_a h \delb_{(p}h_{q)}^{\phantom{q)}a} \\\label{ansatz}
&\quad + A_6 \delb^a h^b_{\phantom{b}(p}\delb_{q)} h_{ab} + A_7 \delb_a h_{bp}\delb^a h_{q}^{\phantom{q)}b} + A_8 \delb_b h_{ap}\delb^a h_{q}^{\phantom{q)}b}   + A_9\delb_a h  \delb^a h_{pq},
\end{align}
where $\{A_n\}$ are arbitrary constants. We proceed by substituting this ansatz into (\ref{tdef}) and solving for $\{A_n\}$. First, let us calculate $\delb^p\tau_{pq}$ by taking the divergence of (\ref{ansatz}); using (\ref{lambdadef}) to convert every $\delb^ah_{ab}$ to $\lambda \delb_b h$, and collecting terms, we find that left-hand side of (\ref{tdef}) amounts to
\begin{align}\nonumber
\kappa \delb^p\tau_{pq}&= (2 A_0 +A_3) \delb_q \delb_c h_{ab}\delb^c h^{ab} + (2 A_1 + A_4 + \half \lambda A_5 + \lambda A_9) \delb_a h \delb^a \delb_q h + (2 A_2 + \half A_6) \delb_c h_{ab}\delb_q \delb^b h^{ac}  \\\nonumber
&\quad + A_3 \delb^2 h_{ab} \delb_q h^{ab} + A_4 \delb^2 h \delb_q h + (\half A_5 + \lambda A_7 + \lambda A_8 +A_9) \delb_a \delb^b h \delb^a h_{bq} + (\half A_5 +\half \lambda A_6)\delb_a \delb_b h \delb_q h^{ab}\\\label{tderiv1}
&\quad + \half A_5 \delb^a h \delb^2 h_{aq}+ (\half A_6 + A_7) \delb^a \delb_b h_{cq} \delb_a h^{bc} +\half A_6 \delb_a h_{bq}\delb^2 h^{ab} +  A_8 \delb^c h_{ab}\delb^a \delb^b h_{cq}.
\end{align}
Meanwhile, (\ref{lambdadef}) simplifies the right-hand side of (\ref{tdef}):
\begin{align}\label{tderiv2}
-\half (\delb_q h^{ab}) \widehat{G}_{ab}^{\phantom{ab}cd}h_{cd}=-\half \delb_q h^{ab} \left((\lambda-\half)\delb_a\delb_b h - \half \delb^2 h_{ab} + \half \gb_{ab}(1-\lambda)\delb^2 h \right).
\end{align}
Comparing (\ref{tderiv1}) with (\ref{tderiv2}), term by term, we conclude that the \emph{unique} solution to (\ref{tdef}) is
\begin{align}\bs
A_0=-\tfrac{1}{8},\quad
A_1=\tfrac{1}{16},\quad
A_3 = \tfrac{1}{4},\quad
A_4 = -\tfrac{1}{8},\quad
A_2=A_5=A_6=A_7=A_8=A_9=0,\quad
\lambda = \half.\es
\end{align}
\end{widetext}
We have therefore determined the formula for our gravitational energy-momentum tensor,
\begin{align}\nonumber
\kappa \tau_{pq}&= \tfrac{1}{4} \delb_p h_{ab}\delb_q h^{ab}- \tfrac{1}{8} \delb_p h\delb_q h \\\label{tau}&\quad - \tfrac{1}{8}\gb_{pq}(\delb_c h_{ab}\delb^c h^{ab}- \half \delb_ah\delb^ah),
\end{align}
and the condition,
\begin{align}
\delb^a h_{ab} -\half \delb_b h =0,
\end{align}
that we must place on the gravitational field.  Finally, we introduce the abbreviation $\hb_{ab}=h_{ab} -\half \gb_{ab}h$ for the trace-reversed gravitational field, and $\bar{\tau}_{ab}=\tau_{ab} -\half \gb_{ab}\tau$ for the trace-reversed gravitational energy-momentum tensor, allowing us to compactly re-express our results:
\begin{align}
\label{taubar}
\kappa \bar{\tau}_{pq}=\tfrac{1}{4}\delb_p  h_{ab}\delb_q \hb^{ab},
\end{align}
\begin{align}\label{dedonder}
\delb^a \hb_{ab} =0.
\end{align}

We are now in a position to justify our earlier claim, that the restriction on the gravitational field is not a hindrance but a major advantage. The condition we have derived (\ref{dedonder}) is simply the defining equation of the \emph{harmonic gauge}.\footnote{This is also commonly referred to as de Donder gauge or Lorentz gauge.} As this equation can \emph{always} be satisfied by making a gauge transformation, it does not in any way limit the physical applicability of our approach! Moreover, we have received a valuable gift: this condition has appeared as a natural consequence of the derivation, and forces upon us a very strong restriction for the diffeomorphism $\phi$ that maps the physical spacetime onto the background. Specifically, $\phi^{-1}$ is required to map the Lorentz coordinates of the background onto \emph{harmonic} coordinates in the physical spacetime.\footnote{To see this, let $\{x^\mu\}$ be Lorentz coordinates on the background ($\delb_a \delb_b x^\mu=0$), and let $\{y^\mu\}$ be coordinates in physical spacetime defined by $y^\mu(p)= x^\mu (\phi(p))$ for all $p \in \mc{M}$. Then $\phi^* (\nabla^2 y^\mu)=  \delb^2 x^\mu -h^{ab} \delb_a \delb_b x^\mu - (\delb^a h_{ab} - \half \delb_b h) \delb^b x^\mu =0$.} We can therefore think of (\ref{dedonder}) as the condition that specifies the correct behaviour to demand of the basis $\{e_\mu^{\phantom{\mu}a}\}$ needed to define sensible energy-momentum currents $J_\mu^{\phantom{\mu}a}$. We should stress, however, that while the harmonic gauge condition has removed the vast majority of the gauge freedom, a small amount remains in the form of transformations $\delta h_{ab}=\delb_{(a} \xi_{b)}$ which satisfy $\delb^2 \xi_a=0$; we will return to this issue in section \ref{int}.

This completes the derivation of $\tau_{ab}$. We have found the unique symmetric tensor, a quadratic function of $\delb_c h_{ab}$, that describes the transfer of energy and momentum between matter and the gravitational field according to (\ref{conservation}). In doing so we have found that this solution only exists if the harmonic condition $\delb_a \hb^{ab} =0$ is obeyed, and  this in turn has solidified the definition of energy-momentum currents $J_\mu^{\phantom{\mu}a}$ as the contraction of $T_{ab}$ with the basis vectors associated with \emph{harmonic} coordinate systems of physical spacetime. In the next section we shall prove that $\tau_{ab}$ displays many other interesting properties very much in keeping with its interpretation as an energy-momentum tensor. In section \ref{int} we shall examine energy-momentum exchange in detail, and address the last piece of gauge freedom. 

\section{Properties}\label{props}
Here we will demonstrate that, in two important special cases, $\tau_{ab}$ exhibits interesting mathematical properties (beyond accounting for $\nabla_a J_\mu^{\phantom{\mu}a}$) that further promote its interpretation as the energy-momentum tensor of the linear gravitational field.

\subsection{Gauge Invariance of Plane-Waves}\label{GIPW}
Consider an arbitrary gravitational \emph{plane-wave}:
\begin{align}\label{plane}
h_{ab}=h_{ab}(k_{\mu}x^\mu).
\end{align}
Here, $k_{a}$ is a constant vector, and $\{x^\mu\}$ are Lorentz coordinates on the background. The linear vacuum field equation
$\delb^2 \bar{h}_{ab}=0$, and the harmonic condition $\delb^a \bar{h}_{ab}=0$, enforce
\begin{align}
k^ak_a=0,\quad k^a\bar{h}^\prime_{ab} =0,
\end{align}
respectively, where the prime indicates differentiation with respect to the variable $k_{\mu}x^\mu$. We wish to consider the most general gauge transformation $\delta h_{ab}=\delb_{(a}\xi_{b)}$ that maintains the plane-wave form of $h_{ab}$. Clearly we require $\xi_a = \xi_a(k_{\mu}x^\mu)$, and thus
\begin{align}\label{planeGT}
\delta h_{ab}= k_{(a} \xi^\prime_{b)}.
\end{align}
Note that $k^a k_a=0$ now guarantees $\delb^2 \xi_a=0$, ensuring that the harmonic condition (\ref{dedonder}) is not broken by the transformation. Let us now calculate the effect of this transformation on the gravitational energy-momentum tensor; working from (\ref{taubar}),
\begin{align}\nonumber
\kappa \delta \bar{\tau}_{pq}&= \half \delb_p \delta h_{ab} \delb_q \bar{h}^{ab} + \tfrac{1}{4} \delb_p \delta h_{ab}\delb_q \delta \bar{h}^{ab}\\\nonumber
&= \half k_pk_qk_{(a}\xi^{\prime\prime}_{b)}\bar{h}^{\prime ab} \\\nonumber&\quad+ \tfrac{1}{4} k_pk_qk_{(a}\xi^{\prime\prime}_{b)}(k^{(a}\xi^{\prime\prime b)} - \half \gb^{ab}k^c\xi^{\prime\prime}_c)\\
&= \tfrac{1}{8}k_pk_q ( (k^c\xi^{\prime\prime}_c)^2 - (k^c\xi^{\prime\prime}_c)^2  )=0.
\end{align}
Thus the energy-momentum of an arbitrary gravitational plane-wave is completely invariant under the gauge freedom consistent with the harmonic condition and its plane-wave form. This is significant for a number of reasons. Firstly it reveals that, for the special case of plane-waves, we need not concern ourselves with the gauge freedom that remains after enforcing the harmonic condition: the requirement that the gauge be chosen such that the plane-wave form of the field be manifest is sufficient to unambiguously define the energy-momentum tensor $\tau_{ab}$ from the physical spacetime $(\mc{M},g)$. Thus, even if one does not accept our method for resolving the last of the gauge ambiguity (to be presented in section \ref{int}) it is still possible to stop at this point and agree that a well-defined energy-momentum tensor for gravitational plane-waves has been found. Secondly, this particular gauge invariance will prove useful when we wish to produce a \emph{global} picture of the motion of energy-momentum: if the \emph{source region} of a gravitational wave is very far from the \emph{detection region}, we may use a different gauge in each and yet still produce a consistent picture of energy-momentum transfer -- the field in intermediate region will approximate a plane-wave, and thus $\tau_{ab}$ in this region will agree with both end-point gauges.\footnote{This idea is explained fully in section \ref{EMTT}.} There is also a third significance to this result, but this will only become apparent once we have demonstrated the second important property of gravitational energy-momentum: positivity.

\subsection{Positivity}\label{Positivity}
This section concerns the energy-momentum of \emph{transverse-traceless} (\TT) gravitational fields, those for which $h=0$, $\delb^ah_{ab}=0$, and $u^a h_{ab}=0$, for some constant timelike vector field $u^a$ defined on the background.\footnote{Clearly, the harmonic condition is satisfied as a result of these requirements.} For now let us simply suppose that these conditions apply to $h_{ab}$ and derive the consequences for $\tau_{ab}$. We shall justify our interest in this specialisation, and offer an interpretation of $u^a$, in section \ref{int}. Presently, let it suffice to say that because these conditions may always be imposed (at least locally) by a gauge transformation in regions where $\check{T}_{ab}=0$, the results of this section will be generally applicable to vacuum regions, but it will not be necessary to demand that $\check{T}_{ab}=0$ globally. We present our result as the following theorem.

\begin{Theorem}
If, at some point $p\in \bc{M}$, the gravitational field $h_{ab}$  obeys the transverse-traceless conditions
\begin{align}\label{TTcon}
\delb^a h_{ab}=0, \quad h=0, \quad u^a h_{ab}=0,
\end{align}
for some timelike vector $u^a$, then $\tau_{ab}$ satisfies the following inequalities
\begin{align}\label{pos1}
v^a \tau_{ab} v^b &\ge 0,\\
\label{pos2}
v^a \tau_{ac}\tau^{c}_{\phantom{c}b} v^b &\le 0,
\end{align}
at $p$, for any timelike vector $v^a$.
\end{Theorem}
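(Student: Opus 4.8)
The plan is to exploit the drastic simplification of $\tau_{ab}$ under the \TT{} conditions and then reduce the whole statement to the dominant energy condition for a collection of massless scalar fields. Since $h=0$ (so that $\hb_{ab}=h_{ab}$ and, the conditions holding in a neighbourhood of $p$, $\delb_c h=0$), the formula (\ref{tau}) collapses to
\begin{align}
\kappa\tau_{pq}=\tfrac14\delb_p h_{ab}\delb_q h^{ab}-\tfrac18\gb_{pq}\delb_c h_{ab}\delb^c h^{ab}.\nonumber
\end{align}
The essential consequence of $u^a h_{ab}=0$ together with $h=0$ is that the contraction over the $ab$ indices becomes \emph{positive definite}: in a Lorentz frame adapted to $u^a$ the field $h_{ab}$ has only spatial, trace-free components, and since $\delb_c u^a=0$ and $\delb_c\gb_{ab}=0$ the same is true of $\delb_c h_{ab}$, so that $\delb_p h_{ab}\delb_q h^{ab}=\sum_{i,j}\delb_p h_{ij}\delb_q h_{ij}$. (This is the one place where I rely on the conditions holding in a neighbourhood, so that the derivatives inherit them.) Note that the vacuum field equations are nowhere needed; the result is purely algebraic in $\delb_c h_{ab}$.

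Next I would write $\kappa\tau_{pq}=\sum_A\Theta_{pq}[W^A]$ as a sum of rank-one building blocks indexed by $A=(i,j)$, where $W^A_p\equiv\delb_p h_{ij}$ and $\Theta_{pq}[W]\equiv\tfrac14\left(W_pW_q-\half\gb_{pq}W_cW^c\right)$ is (a quarter of) the energy-momentum tensor of a massless scalar with gradient $W_p$. I would then show that each block individually satisfies the dominant energy condition, using the orthogonal decomposition $W=\alpha v+W_\perp$, with $W_\perp\cdot v=0$ (hence $W_\perp$ spacelike), relative to the given timelike $v^a$. A short computation yields $v^pv^q\Theta_{pq}[W]=\tfrac18\alpha^2(v\cdot v)^2-\tfrac18(v\cdot v)(W_\perp\cdot W_\perp)\ge0$ (both terms non-negative because $v\cdot v<0$), while for the flux $P_a\equiv\Theta_{ab}[W]v^b$ one finds the clean identity $P^aP_a=\tfrac1{64}(W_cW^c)^2(v\cdot v)\le0$. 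Since in addition $P\cdot v=v^pv^q\Theta_{pq}[W]\ge0$, the causal vector $P$ is past-directed, i.e. $-P$ is future-directed causal.

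Inequality (\ref{pos1}) then follows immediately by linearity, $v^pv^q\tau_{pq}=\kappa^{-1}\sum_A v^pv^q\Theta_{pq}[W^A]\ge0$. For (\ref{pos2}) the product $\tau_{ac}\tau^c_{\phantom{c}b}$ mixes distinct blocks, so a term-by-term argument on the square is hopeless; the resolution I would use is that $-\kappa\tau^a_{\phantom{a}b}v^b=\sum_A\left(-\Theta^a_{\phantom{a}b}[W^A]v^b\right)$ is a sum of future-directed causal vectors, and the future causal cone is convex, hence closed under addition. Therefore $-\kappa\tau^a_{\phantom{a}b}v^b$ is itself causal, which is precisely $v^a\tau_{ac}\tau^c_{\phantom{c}b}v^b\le0$.

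I expect the main obstacle to be exactly this final step: recognising that (\ref{pos2}) is \emph{not} an algebraic identity for the sum but a structural consequence of the convexity of the causal cone, and correspondingly checking that every flux $-\Theta[W^A]v$ is oriented into the \emph{same} cone (which is why establishing the past-directedness of each $\Theta[W^A]v$ in the previous step is indispensable). Everything else is routine once the positive-definiteness of the $ab$-contraction and the reduction to scalar-field blocks are in place.
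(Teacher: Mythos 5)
Your proposal is correct, and it takes a genuinely different route from the paper. The paper's proof is a direct two-frame computation: it introduces one Lorentz basis adapted to $u^a$ (so that only $h_{ij}$ survives) and a second adapted to $v^a$, expresses $\kappa\tau_{0'0'}$ and $(\tau_{0'\mu'})^2$ in these mixed components, and exhibits each as a sum of squares --- the second inequality requiring a Lagrange-identity manipulation, $-\tfrac14\bigl((\partial_{i'}h_{ij})^2-(\dot h_{ij})^2\bigr)^2-\tfrac12\bigl(\dot h_{ij}\partial_{i'}h_{kl}-\dot h_{kl}\partial_{i'}h_{ij}\bigr)^2\le0$, to absorb the cross terms. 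You instead decompose $\kappa\tau_{pq}$ into massless-scalar blocks $\Theta_{pq}[W^{(ij)}]$ with $W^{(ij)}_p=\delb_p h_{ij}$, verify the dominant energy condition for each block (your identities $v^pv^q\Theta_{pq}=\tfrac18\alpha^2(v\cdot v)^2-\tfrac18(v\cdot v)W_\perp\!\cdot\!W_\perp$ and $P^aP_a=\tfrac1{64}(W_cW^c)^2(v\cdot v)$ both check out), and then obtain (\ref{pos2}) from the convexity of the future causal cone, since all fluxes $-\Theta_{ab}[W^A]v^b$ are future-directed causal; this last step is the genuinely structural replacement for the paper's Lagrange identity, and your care in fixing a common cone orientation is exactly what makes it sound. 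Two further points in your favour: you correctly flag that the \TT{} conditions (\ref{TTcon}) must hold in a neighbourhood of $p$ (so the derivatives inherit them), an assumption the paper's proof also makes but leaves implicit in reducing (\ref{taubar}); and your scalar-block decomposition makes transparent why the argument extends when a Newtonian potential is present --- the paper itself exploits precisely this in section \ref{tindep}, equation (\ref{N}), where $\Phi$ ``effectively behav[es] as an additional component of $h^\TT_{ij}$''. What the paper's computation buys in exchange is explicit closed-form expressions for the energy density and the norm of the flux in the two adapted frames, quantitative detail your cone argument deliberately discards.
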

\begin{proof} Without loss of generality we can set $u^a u_a=-1$ and $v^a v_a=-1$. Now, introduce two Lorentzian coordinate systems at $p$, the first $\{x^0,x^i\}$ ($i=1,2,3$) such that  $u^0=1$, $u^i=0$, and the second $\{x^{0^\prime},x^{i^\prime}\}$ ($i^\prime=1,2,3$) such that  $v^{0^\prime}=1$, $v^{i^\prime}=0$. The transverse-traceless conditions (\ref{TTcon}) reduce (\ref{taubar}) to
\begin{align}
\kappa\tau_{pq}=-\tfrac{1}{8}\gb_{pq}\delb_c h_{ab}\delb^c h^{ab} + \tfrac{1}{4}\delb_p h_{ab}\delb_q h^{ab},
\end{align}
and set $h_{0i}=h_{00}=0$. Using the primed basis to express the tensor indices of $\gb_{ab}$ and $\delb_a$, the unprimed basis for $h_{ab}$, and writing $\dot{h}_{ij} \equiv \partial_{{0^\prime}} h_{ij}$, we find that
\begin{align}\nonumber
\kappa v^a\tau_{ab}v^b &=\kappa \tau_{{0^\prime}{0^\prime}}\\\nonumber
&=\tfrac{1}{8}((\partial_{i^\prime} h_{ij})^2-(\dot{h}_{ij})^2) + \tfrac{1}{4} (\dot{h}_{ij})^2\\
&= \tfrac{1}{8}((\partial_{i^\prime} h_{ij})^2+(\dot{h}_{ij})^2) \ge0,
\end{align}
because $(\partial_{i^\prime} h_{ij})^2\equiv\sum_{i,j,i^\prime =1}^3 \partial_{i^\prime} h_{ij}\partial_{i^\prime} h_{ij}$ and $(\dot{h}_{ij})^2 \equiv \sum_{i,j=1}^3\partial_{0^\prime} h_{ij}\partial_{0^\prime} h_{ij}$  are sums of squares. Similarly,
\begin{align}\nonumber
(4\kappa v^a \tau_{ab})^2 &= 16\kappa^2  (-(\tau_{{0^\prime}{0^\prime}})^2+(\tau_{{0^\prime} {i^\prime}})^2)\\\nonumber
&=-\tfrac{1}{4}((\partial_{i^\prime} h_{ij})^2+(\dot{h}_{ij})^2)^2 \\\nonumber&\quad+ (\dot{h}_{ij} \partial_{i^\prime} h_{ij})(\dot{h}_{kl} \partial_{i^\prime} h_{kl})\\\nonumber
&=-\tfrac{1}{4}((\partial_{i^\prime} h_{ij})^2 -(\dot{h}_{ij})^2)^2 - (\dot{h}_{ij})^2(\partial_{i^\prime} h_{kl})^2 \\\nonumber&\quad+ (\dot{h}_{ij} \partial_{i^\prime} h_{ij})(\dot{h}_{kl} \partial_{i^\prime} h_{kl})\\\nonumber
&= -\tfrac{1}{4}((\partial_{i^\prime} h_{ij})^2-(\dot{h}_{ij})^2)^2\\&\quad -\half (\dot{h}_{ij} \partial_{i^\prime} h_{kl}-\dot{h}_{kl} \partial_{i^\prime} h_{ij})^2\le0.
\end{align}
\end{proof}

The inequalities we have just deduced are the gravitational version of the \emph{Dominant Energy Condition}: the first indicates that $\tau_{ab}$ only ever defines positive energy-densities; the second indicates that the flux of this energy can never be spacelike. Succinctly, they tell us that gravitational energy is positive and never flows faster than light. As the Dominant Energy Condition has always referred to \emph{matter}, we will avoid confusion if we resist subsuming (\ref{pos1}) and (\ref{pos2}) under this name; instead, when the gravitational energy-momentum tensor obeys these inequalities (for all timelike $v^a$) we shall simply say that it is \emph{positive}, and write $\tau_{ab}\ge0$ as a shorthand.

That $\tau_{ab}$ is positive for all transverse-traceless $h_{ab}$ is one of the major advantages our approach has over previous descriptions of gravitational energy-momentum \cite{Einstein,LL,BG,Moller,Mannheim}. Provided we work with a transverse-traceless field (with respect to some timelike vector $u^a$), which is always possible locally in a vacuum, $\tau_{ab}$ will always make good physical sense in that it will obey its own version of the Dominant Energy Condition. To some extent, this result supplies its own justification for choosing the \TT-gauge whenever possible; however, we will see in the next section that these conditions arise naturally by considering the gauge invariant transfer of energy-momentum onto point-sources. Furthermore, the significance of $u^a$ (in terms of energy-momentum transfer) will also be explored through these arguments. 

Before we move on, however, we take this opportunity to present an important corollary of the plane-wave gauge-invariance of section \ref{GIPW}. It is well known that there always exists exactly one gauge transformation of the form (\ref{planeGT}) that takes an arbitrary plane-wave (\ref{plane}) to one obeying the \TT-conditions \cite{HobsonGR}. Hence we can transform any gravitational plane-wave into transverse-traceless gauge without altering the energy-momentum tensor, at which point the positivity theorem ensures that $\tau_{ab}\ge0$. Thus, all gravitational plane-waves have positive energy-momentum tensors, even if they are not transverse-traceless.

\section{Interactions}\label{int}
In this section we apply our formula for the gravitational energy-momentum tensor to the interaction between gravity and an idealised matter distribution that we shall refer to as a \emph{point-source}. The reader will be familiar with the \emph{compact source}, an isolated body confined to a compact spatial region of radius $d$ much smaller than the wavelength $\lambda$ of the gravitational radiation it emits; point-sources are the limit of such systems as $d\to0$, entirely analogous to the infinitesimal dipoles of electromagnetism.\footnote{We derive the energy-momentum tensor and gravitational field of the point-source in appendix \ref{sources}.} Not only will this provide a useful example of the practical application of our approach, these considerations will finally allow us to rid ourselves of the last trace of gauge-dependence in our description.

From now on we will work almost exclusively in the flat background spacetime; as such it will generally be convenient to represent all tensors in some Lorentzian coordinate system $\{x^\mu\}$, and to drop the ``caron'' mark from $\check{T}_{\mu\nu}$. Thus, our formula for the gravitational energy-momentum tensor is written as
\begin{align}\label{tauflat}
\kappa \bar{\tau}_{\mu\nu}=\tfrac{1}{4}\partial_\mu h_{\alpha\beta}\partial_\nu \hb^{\alpha\beta},
\end{align}
the harmonic condition becomes
\begin{align}\label{harmonicflat}
\partial^\mu \bar{h}_{\mu\nu}=0,
\end{align}
and the linearised field equations (\ref{FEqs}) are
\begin{align}\label{cFEqs}
\partial^2 \bar{h}_{\mu\nu}=-2\kappa T_{\mu\nu}.
\end{align}
Also, it will be useful to separate $\{x^\mu\}$ into a time coordinate $t= x^0$, and spatial coordinates $\vec{x}=(x^1,x^2,x^3)$, define a radial coordinate $r \equiv |\vec{x}|$, and to use lower-case Roman indices $(i,j,k\ldots)$ to indicate spatial components. Typically, the coordinates will be implicitly chosen to coincide with the rest-frame of the system under consideration.

\subsection{Pulses and Point-Sources}
The core of our analysis will be to examine the most localised gravitational interaction possible: an infinitesimal point-source (at $\vec{x}=0$) met by an instantaneous \emph{pulse} plane-wave (propagating along  the $x^1$ direction, arriving at $\vec{x}=0$ at $t=t_0$).

As a result of calculations in appendix \ref{sources}, we know that a point-source has the following energy-momentum tensor:
\begin{align}\nonumber
T_{00}&= M \delta(\vec{x}) + \half I_{ij} \partial_i \partial_j \delta(\vec{x}),\\
T_{0i}&= \half(\dot{I}_{ij} +J_{ij})\partial_j \delta (\vec{x}),\\\nonumber
T_{ij}&= \half \ddot{I}_{ij} \delta (\vec{x}).
\end{align}
Here $M$ and $J_{ij}= J_{[ij]}$ are constants representing, respectively, the mass and angular momentum of the source, and $I_{ij}=I_{(ij)}(t)$ its (time dependent) quadrupole moment.\footnote{The reader should refer to appendix \ref{sources} for definitions of these quantities in terms of the infinitesimal limit of the compact source.} We do not intend to use this point-source as an actual source of gravitational radiation, but rather as a probe of the energy-momentum of the incident pulse. To this end, we are interested in the limit $M,I_{ij}, J_{ij}\to0$, allowing us to neglect the self-interaction of the source. As this procedure is entirely analogous to using a ``test-particle'' to probe the geometry of spacetime, we shall refer to the point-source as a \emph{test-source} in this limit.

The gravitational field will consist of two parts: $h_{\mu\nu}= h^\text{source}_{\mu\nu} + h^\text{wave}_{\mu\nu}$.  The first, due to the test-source, is given in appendix \ref{sources} by (\ref{hpoint}) and satisfies the inhomogeneous field equations
\begin{align}
\partial^2 \bar{h}^\text{source}_{\mu\nu}&=-2\kappa T_{\mu\nu}.
\end{align}
The latter is the incident pulse plane-wave,
\begin{align}
h^\text{wave}_{\mu\nu} = A_{\mu\nu} H(k_\alpha x^\alpha-t_0),
\end{align}
where $H$ is the Heaviside step function, $k_\mu = (1,-1,0,0)$ is a null vector in the $x^1$ direction, and $A_{\mu\nu}$ is a constant tensor satisfying $k^\mu \bar{A}_{\mu\nu}=0$, as demanded by the harmonic condition. Obviously, $h^\text{wave}_{\mu\nu}$ satisfies the homogeneous field equations: $\partial^2 h^\text{wave}_{\mu\nu}=0$. 

Let us now compute $\partial^\mu \tau_{\mu\nu}$, which, via (\ref{conservation}), quantifies the exchange of energy-momentum between the test-source and the gravitational wave. Starting from (\ref{tauflat}), we have
\begin{align}\nonumber
\kappa \partial^\mu\tau_{\mu\nu}&= \tfrac{1}{4} \partial^2 \bar{h}_{\alpha\beta} \partial_\nu h^{\alpha\beta}\\\nonumber
&= - \half\kappa  T_{\alpha\beta}\partial_\nu (A^{\alpha\beta} H(k_\sigma x^\sigma-t_0)) \\&\quad+ O((h^\text{source}_{\mu\nu})^2),
\end{align}
and we neglect terms of order $(h^\text{source}_{\mu\nu})^2$ compared to those of order $h^\text{wave}_{\mu\nu}h^\text{source}_{\mu\nu}$ in the limit $M,I_{ij}, J_{ij}\to0$.\footnote{There is a slight technical issue here. From (\ref{hpoint}), we can see that, as $r\to 0$, $h^\text{source}_{\mu\nu}\to \infty$; thus $h^\text{source}_{\mu\nu}$ inevitably becomes larger than $h^\text{wave}_{\mu\nu}$ at small enough distances. Strictly speaking, then, one should use a finite-size source (of radius $d$, say) when one takes $M,I,J\to0$ and neglects $O((h^\text{source}_{\mu\nu})^2)$. As we can choose $d$ to be as small as we like, however, we can always replace the finite source with an equivalent point-source after this limit has been taken.} Using $H^\prime=\delta$, the Dirac delta function, we arrive at
\begin{align}\nonumber
\partial^\mu\tau_{\mu\nu}&= -\half k_\nu \delta(k_\sigma x^\sigma -t_0) T_{\alpha\beta}A^{\alpha\beta}\\\nonumber
&= -\tfrac{1}{4}k_\nu \delta(k_\sigma x^\sigma-t_0) \\\nonumber&\quad
\times\Big(\ddot{I}_{ij}A_{ij} \delta(\vec{x}) - 2 (\dot{I}_{ij} + J_{ij})\partial_j \delta (\vec{x})A_{i0}\\\label{notGI}
&\quad\quad\quad 
+ (2M\delta(\vec{x}) + I_{ij}\partial_i\partial_j \delta(\vec{x}))A_{00}\Big).
\end{align}
This is the equation we sought. It determines the energy and momentum collected by our probe due to the incident pulse, and locates this transfer in spacetime. The key problem is that above relation is not, as it stands, gauge invariant; we address this issue the next section. 

\subsection{Gauge Invariance and Microaveraging}\label{Micro}
The incident wave possesses gauge freedom that neither breaks the harmonic condition nor spoils its pulse plane-wave form:
\begin{align}\label{pulseGT}
\delta h^\text{wave}_{\mu\nu} = \partial_{(\mu}\xi_{\nu)};\qquad \xi_\mu = E_\mu \Delta(k_\alpha x^\alpha-t_0),
\end{align}
where $E_\mu$ is any constant vector, and $\Delta^\prime = H$. The effect of this transformation is to alter $A_{\mu\nu}$ by $\delta A_{\mu\nu}= k_{(\mu}E_{\nu)}$, and although the transverse components do not change ($\delta A_{22}= \delta A_{23}=\delta A_{33}=0$) the right-hand side of (\ref{notGI}) is clearly not invariant. The beauty of working with an instantaneous interaction, however, is that we can average over the (infinitesimal) interaction region
\begin{gather}\nonumber
\lim_{\epsilon \to 0} \mc{B}_\epsilon(t_0),\\ \ \text{where} \quad \mc{B}_\epsilon(t_0) \equiv \{(t,\vec{x}):  |t - t_0| \le \epsilon,|\vec{x}| \le \epsilon\},
\end{gather}
without sacrificing the localised description of $\partial^\mu \tau_{\mu\nu}$. Let us call this operation a \emph{microaverage} (at $\vec{x}=0$, $t=t_0$) and denote it by $\langle\ldots\rangle_{t_0}$:
\begin{align}\label{microaverage}
\langle f\rangle_{t_0}\equiv\delta(\vec{x})\delta(t-t_0) \lim_{\epsilon \to 0} \int_{\mc{B}_\epsilon (t_0)} f \ud^4 x.
\end{align}
For the interaction we are analysing, the integral $\int_{\mc{B}_\epsilon (t_0)} \partial^\mu\tau_{\mu\nu} \ud^4 x$ captures the key physical content of $\partial^\mu \tau_{\mu\nu}$. To elaborate: the divergence theorem equates this integral with $\int_{\partial \mc{B}_\epsilon (t_0)} \tau_{\mu\nu} \ud^4 S^\mu$ which measures the mismatch between the net flux of gravitational energy-momentum entering through a spherical surface $\mc{S}_\epsilon \equiv \{\vec{x}: |\vec{x}| \le \epsilon\}$ barely larger the source, and the gravitational energy-momentum contained within $\mc{S}_\epsilon$ that is gained between the times $t=t_0 -\epsilon$ and $t=t_0 + \epsilon$. By the defining property (\ref{tdef}) of $\tau_{\mu\nu}$, this mismatch in gravitational energy-momentum precisely accounts for the energy-momentum absorbed by the source, which is what we wanted to know. The only information we have lost in taking the microaverage is the knowledge of precisely where \emph{within the test-source} the energy-momentum is being absorbed. As we have let the size of this probe shrink to zero, however, this is of little concern. 

The computational advantage of the microaverage is that the integration in (\ref{microaverage}) allows us to transfer derivatives off the delta-functions in (\ref{notGI}); for example,
\begin{align}\nonumber
\int_{\mc{B}_\epsilon (t_0)} \delta(k_\alpha &x^\alpha-t_0)\dot{I}_{ij}\partial_j \delta(\vec{x}) A_{i0} \ud^4 x \\\nonumber&= -\int_{\mc{B}_\epsilon (t_0)} \partial_j\delta(t-x^1-t_0) \dot{I}_{ij} \delta(\vec{x}) A_{i0} \ud^4 x  \\\nonumber
&= \int_{\mc{B}_\epsilon (t_0)} \dot{\delta}(t-x^1-t_0) \dot{I}_{i1} \delta(\vec{x}) A_{i0} \ud^4 x  \\\nonumber
&= - \int_{\mc{B}_\epsilon (t_0)} \delta(t-x^1-t_0) \ddot{I}_{i1}\delta(\vec{x}) A_{i0} \ud^4 x\\
&= - \ddot{I}_{i1}(t_0) A_{i0}.
\end{align}
Applying this technique to the whole of (\ref{notGI}) yields
\begin{align}\nonumber
\langle \partial^\mu\tau_{\mu\nu} \rangle_{t_0} &= -\tfrac{1}{4}k_\nu \delta(\vec{x}) \delta(t-t_0) \\\label{mic1}&\quad \times\left(\ddot{I}_{ij}A_{ij} + 2 \ddot{I}_{i1}A_{i0} + \ddot{I}_{11}A_{00} + 2 M A_{00} \right).
\end{align}
Finally, we unpack $k^\mu\bar{A}_{\mu\nu}=0$,
\begin{align}\label{unpack}\bs
\Rightarrow A_{00}+ A_{11} +2 A_{01} =0,\quad A_{22}+A_{33}=0,\\\quad A_{02} +A_{12}=0,\quad A_{03} +A_{13}=0,\es
\end{align}
and substitute these into (\ref{mic1}). The result is
\begin{align}\nonumber
\langle \partial^\mu\tau_{\mu\nu} \rangle_{t_0} &= -\half k_\nu \delta(\vec{x}) \delta(t-t_0) \\\label{maint}&\quad\times \left(\ddot{I}_{\times}A_{\times} + \ddot{I}_{+}A_{+} +M A_{00} \right),
\end{align}
where we have written the transverse components of the wave as $A_\times =  A_{23}$ and $A_+ = (A_{22}-A_{33})/2$, and extended this notation to $I_{ij}$. We are almost done: $\delta A_\times=\delta A_+=0$ under the gauge transformation (\ref{pulseGT}), so the first two terms in (\ref{maint}) are manifestly gauge invariant; however, the term proportional to $M A_{00}$ is not.

Various arguments can be made to show that this ``monopole term'' is physically irrelevant to the energy-momentum transfer we are considering. At the simplest level, the fact that we are free to set $A_{00}$ to any value (including zero) through gauge transformation (leaving $A_\times$ and $A_+$ untouched) indicates that the monopole term can have no bearing on the energy-momentum of the physical system under scrutiny. Furthermore, if we consider a wave for which $A_{\mu\nu}=k_{(\mu} E_{\nu)}$, then it is clear that, while such a pulse is gauge-equivalent to flat spacetime ($A_{\mu\nu}=0$) it would nonetheless register a transfer of energy-momentum if the monopole term were to be believed.

The physical irrelevance of the monopole term should come as no great surprise, as there can be no way to extract energy-momentum from a gravitational wave using a monopole alone (i.e.\ a test-source with $I_{ij}=J_{ij}=0$): an observer sitting on an isolated point mass could perform no local test to distinguish whether a gravitational wave had even passed, and in particular, must be unable to extract any energy.

In fact, all that the monopole term is responding to is a change in normalisation of the time coordinate in the physical spacetime: $\phi^*(e^a_0 e_{0a}) = -1 +h_{00}$. Naively, we might expect this factor to be significant as it represents the Newtonian potential at the test-source. However, this is not a local effect. The only way an observer on the test-source could be aware of such a shift is by comparison with some standard clocks at spatial infinity. The pulse plane-wave prevents this idea from being well-defined, however, as it divides spatial infinity into two regions: $x^1 < t-t_0 $, where wave has already been received, and $x^1 > t-t_0$, where it has not. Fortunately, a gauge can always be chosen that does not suffer from this inconsistency; setting $A_{00}=0$ is the only way to ensure that the standard clocks at infinity all run at the \emph{same} rate (relative to our coordinate $t$) and this inevitably removes all trace of the monopole term from the interaction. Thus the insistence that the clocks at infinity agree with each other amounts to a prescription that removes the gauge-dependence of our microaveraged energy-momentum transfer.\footnote{The reader should not be under the impression that the monopole term is \emph{universally} insignificant. Thus far we have argued its irrelevance only for pulse plane-wave, and as we shall see at the beginning of the next section, this idea follows by linearity to general gravitational waves. However, should the gravitational field have a \emph{time-independent} part, then it is possible for this to couple to the monopole in a physically meaningful way. This is due to the particularly limited gauge freedom available to $h_{00}$ when the field is time-invariant. We will return to this issue in Section \ref{tindep}.} We can implement this procedure mathematically (without fixing the gauge, or setting $M=0$, which is physically untenable) by acting on $\langle \partial^\mu\tau_{\mu\nu} \rangle_{t_0}$ with the operator $(1-M\partial_M)$:
\begin{align}\nonumber
\langle \partial^\mu\tau_{\mu\nu} \rangle_{t_0}^\slashed{M} &\equiv (1-M\partial_M)\langle \partial^\mu\tau_{\mu\nu} \rangle_{t_0}\\\label{monopolefree}& = -\half k_\nu \delta(\vec{x}) \delta(t-t_0) \left(\ddot{I}_\times A_{\times} + \ddot{I}_+ A_{+}  \right).
\end{align}
We shall call this the \emph{monopole-free} microaverage. This is a local, completely gauge invariant description of the energy-momentum transferred onto test-sources by pulse plane-waves. Furthermore, the right-hand side of (\ref{monopolefree}) has an obvious physical interpretation: the coupling between $\ddot{I}_{ij}$ and $A_{ij}$ can be understood, roughly speaking, as the product of a force (responsible for accelerating  the constituents of the quadrupole moment) and a distance (actually an expansion/contraction of spacetime) and thus represents the work done on the test-source. For example, consider a test-source composed of two bodies of mass $m$ separated by a light elastic rod of length $2d$ aligned with the $x^2$-axis; provided the amplitude of the motion of the masses is much smaller than $d$, then $\ddot{I}_{22}\cong 4d m a$, where $a$ is the (outward) acceleration of each mass. Due to the gravitational wave, the proper distance of each mass from the centre of the rod increases by $ A_{22}d/2$; thus, counting the motion of both ends of the rod, the total work done on the source, by the wave, is $-2 (ma) (A_{22}d/2) = -\ddot{I}_{22} A_{22}/4= - \ddot{I}_+ A_+ /2$, which agrees precisely with (\ref{monopolefree}). Thus we see that the monopole-free microaverage corresponds to the familiar physical quantities that we would intuitively use to define the energy and momentum of the test-source.

In the next section we will generalise the monopole-free microaverage to arbitrary gravitational fields, and uncover a substantial mathematical shortcut that will greatly simplify this procedure.

\subsection{Arbitrary Gravitational Fields}\label{Arb}
Clearly, pulse waves are a special case, and one might expect that for an arbitrary (harmonic gauge) plane-wave
\begin{align}\label{Bplane}
h^\text{wave}_{\mu\nu}= B_{\mu\nu}(k_\alpha x^\alpha),\qquad k^\mu\bar{B}_{\mu\nu}=0,
\end{align}
we would need to perform \emph{finite} averages, rather than microaverages, to remove the gauge dependence of our description; thus, we would be forced to sacrifice our \emph{localised} picture of energy-momentum transfer. However, provided $B_{\mu\nu}(t)\to 0$ as $t\to-\infty$, we can always write
\begin{align}\nonumber
h^\text{wave}_{\mu\nu}&= \int^{\infty}_{-\infty} B_{\mu\nu}(t_0)\delta(k_\alpha x^\alpha-t_0)\ud t_0\\\nonumber
&= \int^{\infty}_{-\infty} \dot{B}_{\mu\nu}(t_0)H(k_\alpha x^\alpha-t_0)\ud t_0 \\\nonumber&\quad- \left[ B_{\mu\nu}(t_0)H(k_\alpha x^\alpha-t_0)\right]^{+\infty}_{-\infty}\\
&= \int^{\infty}_{-\infty} \dot{B}_{\mu\nu}(t_0)H(k_\alpha x^\alpha-t_0)\ud t_0,
\end{align}
and perform the monopole-free microaverage on each component of this sum:
\begin{align}\nonumber
&\left\langle\partial^\mu \tau_{\mu\nu}[h^\text{source}_{\alpha\beta}+h^\text{wave}_{\alpha\beta}]\right\rangle_{\int}^\slashed{M} \\\label{intmicro}&\equiv \int^{\infty}_{-\infty}\langle \partial^\mu\tau_{\mu\nu}[h_{\alpha\beta}^\text{source}+ \dot{B}_{\alpha\beta}(t_0) H(k_\sigma x^\sigma-t_0)] \rangle_{t_0}^\slashed{M}  \ud t_0.
\end{align}
The result of this process is
\begin{align}\label{Btrans}
\left\langle\partial^\mu \tau_{\mu\nu}\right\rangle^\slashed{M}_{\int}= -\half k_\nu \delta(\vec{x}) \left(\ddot{I}_{\times}\dot{B}_{\times} + \ddot{I}_+ \dot{B}_{+}\right),
\end{align}
which renders the interaction completely gauge invariant, and does not sacrifice the local character of our description of energy-momentum transfer.\footnote{Just as we write $I_{ij}$ for $I_{ij}(t)$, we have, in (\ref{Btrans}) and elsewhere, left the argument of $B_{\mu\nu}(t)$ implicit.}

Although the operation of splitting the wave into a series of pulses and performing a monopole-free microaverage on each pulse may seem too complicated to be useful, the same result can be achieved by a simple alternative method: transform $h_{\mu\nu}^\text{wave}$ to transverse-traceless gauge, where the vector $u^\mu$ referred to by the \TT-conditions (\ref{TTcon}) corresponds to the rest-frame of the test-source. Then, when we calculate $\partial^\mu \tau_{\mu\nu}$, we will automatically recover the monopole-free microaveraged result. To demonstrate this, we recalculate $\partial^\mu \tau_{\mu\nu}$, generalising (\ref{notGI}) for use with arbitrary plane-waves (\ref{Bplane}),
\begin{align}\nonumber
\partial^\mu\tau_{\mu\nu}&= -\tfrac{1}{4}k_\nu \left(\ddot{I}_{ij}\dot{B}_{ij} \delta(\vec{x}) - 2 (\dot{I}_{ij} + J_{ij})\partial_j \delta (\vec{x})\dot{B}_{i0}\right.\\&\qquad\qquad  \ \left.
 {}+ (2M\delta(\vec{x}) + I_{ij}\partial_i\partial_j \delta(\vec{x}))\dot{B}_{00}\right),
\end{align}
and substitute the \TT-conditions $B_{0\nu}=B=0$ (which, along with $k^\mu\bar{B}_{\mu\nu}=0$, set $B_{1\nu}=0$ and $B_{22}=-B_{33}$):
\begin{align}\nonumber
\partial^{\mu}\tau^{\TT}_{\mu\nu}& \equiv  \partial^{\mu}\tau_{\mu\nu}[h^\text{source}+(h^\text{wave})^\TT] \\\nonumber&= -\tfrac{1}{4}k_{\nu} \delta(\vec{x})\ddot{I}_{ij}\dot{B}_{ij} \\&=- \half k_\nu \delta(\vec{x}) \left(\ddot{I}_{\times}\dot{B}_{\times} + \ddot{I}_{+}\dot{B}_{+} \right).
\end{align}
Hence,
\begin{align}\label{MA=TT}
\left\langle\partial^\mu \tau_{\mu\nu}\right\rangle^\slashed{M}_{\int} = \partial^{\mu}\tau^{\TT}_{\mu\nu}.
\end{align}
Furthermore, this equation is not only applicable to incident plane-waves. Because both sides are linear in $h^\text{wave}_{\mu\nu}$, equations (\ref{MA=TT}) must also hold when $h^\text{wave}_{\mu\nu}$ is any \emph{sum} of plane-waves, propagating in arbitrary directions. Locally, we can always express $h^\text{wave}_{\mu\nu}$ as a sum of plane-waves (and some time-independent part, which we will ignore until section \ref{tindep}) so, quite generally, we have
\begin{align}\label{MA=TT2}
\left\langle\partial^\mu \tau_{\mu\nu}\right\rangle^\slashed{M}_{\int} = -\tfrac{1}{4} \delta(\vec{x})\ddot{I}_{ij}\partial_\nu h^\TT_{ij},
\end{align}
where $h^\TT_{\mu\nu}$ is the incident gravitational field in transverse-traceless gauge. This equation provides an easy method for calculating the energy-momentum transferred onto the microaveraged test-source due to the presence of arbitrary incident gravitational radiation. Moreover, we see that (ignoring the time-independent field) our gauge invariant probe only exchanges energy-momentum with the transverse-traceless field; the other components of the field do not play a role in this process. We explore the wider significance of this result in the next section.

\subsection{Energy-Momentum and Transverse-Traceless Gauge}\label{EMTT}
Let us now take a step back from the fine details of the test-source interaction and assess the general picture that is unfolding. As we first saw in section \ref{Determin}, $\tau_{\mu\nu}$ is not in general invariant under the gauge freedom that remains after the harmonic condition has been enforced. As a partial remedy of this, the monopole-free microaveraged test-source emerged as a local, gauge invariant probe of gravitational energy-momentum exchange. It has now come to light that only the transverse-traceless field takes part in this process. From this standpoint, a method suggests itself which will remove the remaining ambiguity of $\tau_{\mu\nu}$ in a natural fashion: simply transform the incident gravitational field to transverse-traceless gauge! Consequently, only the degrees of freedom relevant to gauge invariant energy-momentum exchange will contribute to the gravitational energy-momentum tensor. We shall codify this idea as a gauge-fixing program defined in terms of two ``frames'', one associated with gravitational detectors, the other with astrophysical sources. In what follows, the gauge-fixing only refers to the \emph{dynamical part} of the gravitational field; as we explain in section \ref{tindep}, the time-independent part of the gravitational field is essentially gauge invariant and so does not need to be fixed in any way.

\emph{Detector-frame}. Consider a gravitational detector $D$ in a region $\mc{V}_D$ which contains no matter besides the detector. We shall suppose that the \emph{incident field} (due to sources outside $\mc{V}_D$) is much larger than the field due to the detector itself; in other words, we model $D$ as a test-source. The detector-frame is then obtained by transforming the \emph{incident field} to \TT-gauge, taking $u^\mu$ to be the four-velocity of the detector.\footnote{Note that it is not the total field, but just the incident field, which is made transverse-traceless. We cannot alter the gauge of the field generated by the detector because it is impossible to produce outgoing spherical waves in the gauge field $\xi^\mu$ without breaking the harmonic condition at $D$. This cannot be allowed to happen if $\tau_{\mu\nu}$ is to account for the energy-momentum exchanged with the detector.} As a result, the energy-momentum transferred onto $D$ will be exactly equal to the gauge invariant quantities defined by the monopole-free microaverage. What is more, we can imagine adding hypothetical test-sources (co-moving with $D$) anywhere within $\mc{V}_D$ in order to ``measure'' the gravitational energy-momentum there; because the field has been prepared in this gauge, the result will agree with the gauge invariants we have already defined. In this way, the detector-frame defines $\tau_{\mu\nu}$ through the gauge invariant energy-momentum that would be absorbed by furnishing $\mc{V}_D$ with an array of infinitesimal probes moving at the same velocity as the actual detector.

\emph{Source-frame}. Now consider a compact source $S$ in a region $\mc{V}_S$ which, as above, contains no other matter. In contrast to the detector, we shall assume any incident field can be neglected in comparison to the \emph{outgoing field} due to $S$. The source-frame is obtained by transforming the \emph{outgoing field} to \TT-gauge, taking $u^\mu$ to be the four-velocity of the source. This gauge transformation can only be achieved by breaking the harmonic condition at $S$ (see Appendix \ref{TTgauge} for details) so it will not be possible to use $\tau_{\mu\nu}$ to describe the energy-momentum lost by the source; however, this self-interaction is ill-defined for a point-like system anyway, and we would have to resolve the source into component parts before such a question could be answered. Outside the source, $h_{\mu\nu}$ will remain harmonic, so $\tau_{\mu\nu}$ will still represent the energy-momentum that could be absorbed by a hypothetical test-source (with the velocity of $S$) were we to insert one. Much like the detector-frame, we can think of this prescription as measuring $\tau_{\mu\nu}$ by filling $\mc{V}_S$ with infinitesimal probes that are co-moving with the source.

The gauge-fixing program is simple: if one wishes to describe the energy-momentum of the gravitational field as it would be measured by some detector $D$, then adopt the source-frame near distant astrophysical sources, and the detector-frame of $D$ everywhere else. This allows the energy-momentum in the vicinity of a source to be unambiguously determined by the source alone, whilst simultaneously adapting the gravitational field for a description of energy-momentum absorption by the detector. Remarkably, despite using \emph{different} \TT-gauges in the various regions, this program still produces a self-consistent picture of the propagation of energy-momentum from the sources to the detector. This is because, many wavelengths from an isolated source, the gravitational field approximates a plane-wave, and so (as we saw in section \ref{GIPW}) the gravitational energy-momentum is gauge invariant there. Thus, in this regime, the source-frame energy-momentum and the detector-frame energy-momentum \emph{are equal}, regardless of the relative velocity between the source and the detector. The plane-wave regions will therefore ``sew together'' the different frames and produce a globally consistent description of energy-momentum flowing from source to detector. When the sources are not isolated, but are separated from each other by only a small number of wavelengths, no plane-wave region will exist between the sources. In this case, the sources must be treated as one extended source, with a joint source-frame that identifies $u^\mu$ with the four-velocity of the centre of mass of the many-body system. We illustrate the gauge-fixing program schematically in figure \ref{Sewing}.

\begin{figure}
\centering
\includegraphics[scale=.45]{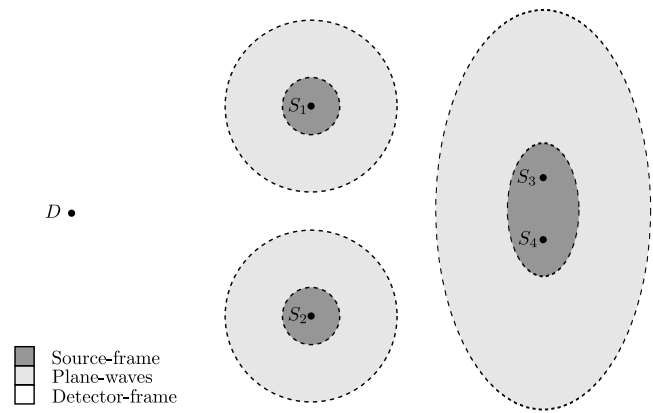}
\caption{Schematic showing plane-wave regions sewing various source-frames to the detector-frame of $D$. $S_1$ and $S_2$ represent compact sources many wavelengths apart; if they have different velocities, then each will determine a separate source-frame. In contrast, $S_3$ and $S_4$ are only separated by a small number of wavelengths; as there is no plane-wave region dividing the sources, they must share their source-frame. It is natural to base this joint source-frame on the velocity of the centre of mass of the multi-component system.}
\label{Sewing}
\end{figure}

Up to this point, we have justified our insistence on \TT-gauge based purely on considerations of energy-momentum exchange with matter. Of course, there is another exceptional property of $\tau_{\mu\nu}$, derived in section \ref{Positivity}, that also holds under these conditions: it is always \emph{positive}. This is a peculiar and surprising result. It is something of a small miracle that transverse-tracelessness guarantees not only agreement with the monopole-free microaverage, but also ensures that $\tau_{\mu\nu}$ represents positive energy-density, and causal energy flux. Under these conditions it will always be possible to make physical sense of the gravitational energy-momentum tensor: we will never have to interpret (or explain) negative or superluminal energy. 

We feel that the dual significance of transverse-traceless gauge leaves little doubt that this is the correct procedure by which to remove the final trace of ambiguity in the definition of $\tau_{\mu\nu}$. In section \ref{Aps} we will apply this program to a small number of examples, including two specific compact sources: a vibrating rod, and an equal-mass binary. First, however, we must address a technical issue regarding the time-independent part of the gravitational field. 

\subsection{Time-Independent Fields}\label{tindep}
Unlike the dynamical part of the gravitational field, the gauge of the time-independent mode (i.e. the time-averaged field) is completely fixed by insisting that (a) the harmonic condition $\partial^\mu\bar{h}_{\mu\nu}=0$ holds everywhere, (b) $h_{\mu\nu}\to 0$ as $r\to \infty$, and (c) gauge transformations $\delta h_{\mu\nu}=\partial_{(\mu}\xi_{\nu)}$ are finite everywhere and bounded at infinity. To see this, suppose that our time-independent field $h_{\mu\nu}(\vec{x})$ obeys the harmonic condition and vanishes at spatial infinity. Then the transformed field $h^\prime_{\mu\nu}= h_{\mu\nu}+ \partial_{(\mu}\xi_{\nu)}$ will only satisfy the harmonic condition if $\partial^2 \xi=0$, and will only be time-independent if $\ddot{\xi}_\mu=0$. Thus $\partial_i^2 (\partial_{(\mu}\xi_{\nu)}) =0$, the only bounded solutions of which are constants (by Liouville's theorem). Hence we are forced to take $h^\prime_{\mu\nu}= h_{\mu\nu}$ if the new field is to also vanish at spatial infinity, and we thus conclude the gauge is unique.\footnote{This argument relies on our insistence that the harmonic condition be valid \emph{everywhere}. In section \ref{TTsource} we will make use of the following mathematical trick: by relaxing the harmonic condition at the source itself, we will be able to combine many local \TT-gauges to form a gauge in which the dynamical part of the gravitational field (outside the source) is transverse-traceless for all time. As we will see, however, it is impossible to apply this procedure to a time-independent field. Thus there is nothing to gain from weakening the harmonic condition on the time-independent mode, and it is therefore kept unbroken.}

This result reveals that we are not required to perform any form of microaverage to remove the gauge dependence of the energy-momentum transfer associated with the time-independent mode of the gravitational field: this mode is already gauge invariant. In truth, this is a rather convenient situation. We could not microaverage a time-independent field even if we needed to, due to the caveat $B_{\mu\nu}(t)\to 0$ as $t\to-\infty$, encountered when deriving (\ref{intmicro}).

As there is no gauge freedom in the time-independent part of the gravitational field, we cannot expect this mode to have $h_{0\mu}=0$ or $h=0$ in general. This leaves open the possibility (at least in principal) that close to sources, where the time-independent mode can become comparable in amplitude to the dynamical field, the positivity of $\tau_{\mu\nu}$ may be compromised. However, in section \ref{TTsource} we will see that, even very close to (but not inside) a compact source, the time-independent field obeys $\bar{h}_{00}\gg \bar{h}_{0i}\gg \bar{h}_{ij}$.\footnote{These order-of-magnitude inequalities are not limited to the compact source. The time-independent mode of the gravitational field is always generated by the time-averaged energy-momentum tensor of matter $\langle T_{\mu\nu}\rangle$, and it is to be expected that this field will be dominated by the slow (i.e.\ non-relativistic) motion of matter, so that $\langle T_{00}\rangle \gg \langle T_{0i}\rangle\gg\langle T_{ij}\rangle$. Hence $\bar{h}_{00}\gg \bar{h}_{0i}\gg \bar{h}_{ij}$ will hold quite generally: whenever $\langle T_{\mu\nu}\rangle$ is dominated by non-relativistic motion.} It is easy to show that such a field will not upset the positivity of $\tau_{\mu\nu}$. Neglecting the small quantities, the trace-reversed gravitational field will take the form
\begin{align}
\bar{h}_{\mu\nu}=h^\TT_{\mu\nu} - 4 \Phi u_\mu u_\nu,
\end{align}
where $h^\TT_{\mu\nu}$ is the dynamical field in transverse-traceless gauge, and $\Phi \equiv - \bar{h}_{00}/4$ is the the Newtonian potential, the only non-negligible contribution from the time-independent mode. The trace-reversed gravitational energy-momentum tensor therefore takes the form
\begin{align}\label{N}
4\kappa \bar{\tau}_{\mu\nu}=\partial_\mu h_{\alpha\beta}\partial_\nu \hb^{\alpha\beta} = \partial_\mu h^\TT_{ij}\partial_\nu h^\TT_{ij} + 8\partial_\mu \Phi  \partial_\nu \Phi,
\end{align}
which ensures that the positivity proof of section \ref{Positivity} can proceed almost exactly as before, with $\Phi$ effectively behaving as an additional component of $h^\TT_{ij}$. Thus, even though it is not transverse-traceless, the time-independent mode does not give rise to any negative or superluminal energy.

\section{Applications}\label{Aps}
This section is devoted to calculating the energy-momentum content of the gravitational field in a small number of examples, following the gauge-fixing program of section \ref{EMTT}.

\subsection{Plane-Waves}
Although we have already studied gravitational plane-waves in a variety of contexts, we have yet to evaluate the energy-momentum they carry. This calculation will serve as a simple first example, and will illustrate the use of the detector-frame. 

We begin with an arbitrary (harmonic gauge) plane-wave,
\begin{align}
h_{\mu\nu}=h_{\mu\nu}(k_{\alpha}x^\alpha),\quad
k^\mu k_\mu=0,\quad k^\mu\bar{h}^\prime_{\mu\nu} =0,
\end{align}
and substitute this field into equation (\ref{tauflat}):
\begin{align}\label{plane1}
\kappa \tau_{\mu\nu}= \tfrac{1}{4} k_\mu k_\nu h^\prime_{\alpha\beta}\bar{h}^{\prime\alpha\beta}.
\end{align}
As we ascertained in section \ref{GIPW}, the energy-momentum of  plane-waves is gauge invariant. Consequently, we can simplify (\ref{plane1}) by evaluating $h_{\mu\nu}$ in \TT-gauge, thereby removing all components except for $h_+$ and $h_\times$:
\begin{align}\label{plane2}
\kappa \tau_{\mu\nu}= \half k_\mu k_\nu ( (h^\prime_+)^2 + (h^\prime_\times)^2).
\end{align}
In fact, because $\delta h_+= \delta h_\times=0$ under any gauge transformation that keeps $h_{\mu\nu}$ a plane-wave, the right-hand side of this equation is gauge invariant also. Hence, equation (\ref{plane2}) must hold in any gauge, and all other terms on the right-hand side of (\ref{plane1}) must cancel in general.\footnote{This can be verified by taking $k_\mu = (1,-1,0,0)$ and using $k^\mu \bar{h}_{\mu\nu}=0$ in the same form as (\ref{unpack}).} Using this formula for $\tau_{\mu\nu}$, every future-directed timelike unit-vector $v^\mu$ defines a gravitational energy current-density,
\begin{align}
v^\mu\tau_{\mu\nu} = v^\mu k_\mu k_\nu ( (h^\prime_+)^2 + (h^\prime_\times)^2)/2\kappa,
\end{align}
which is clearly future-directed and null; unsurprisingly, the energy of a gravitational plane-wave is positive and flows at the speed of light in the direction of propagation.

So far, the gauge invariance of $\tau_{\mu\nu}$ has made gauge-fixing unnecessary. The insistence that we evaluate the $h_{\mu\nu}$ in the detector-frame only becomes important when there are multiple plane-waves propagating in different directions. Suppose, for example, that there are two plane-waves: 
\begin{align}\label{2planes}
h_{\mu\nu}=h^\text{I}_{\mu\nu}(k^\text{I}_\alpha x^\alpha)+ h^\text{II}_{\mu\nu}(k^\text{II}_\alpha x^\alpha), 
\end{align}
where $k^\text{I}_\mu$ and $k^\text{II}_\mu$ are non-parallel null vectors. As $\tau_{\mu\nu}$ is quadratic in $h_{\mu\nu}$, the energy-momentum of the total field takes the form
\begin{align}\nonumber
\kappa \tau_{\mu\nu}&= \kappa \tau^\text{I}_{\mu\nu} + \kappa \tau^\text{II}_{\mu\nu} + \half k^\text{I}_{(\mu}k^\text{II}_{\nu)}h^{\text{I}\prime}_{\alpha\beta}\bar{h}^{\text{II}\prime\alpha\beta} \\\label{2planes2}&\quad- \tfrac{1}{4}\eta _{\mu\nu}k^\text{I}_{\sigma}k^{\text{II}\sigma} h^{\text{I}\prime}_{\alpha\beta}\bar{h}^{\text{II}\prime\alpha\beta},
\end{align}
where $\tau^\text{I}_{\mu\nu}$ and $\tau^\text{II}_{\mu\nu}$ are the individual energy-momentum tensors of $h^\text{I}_{\mu\nu}$ and $h^\text{II}_{\mu\nu}$ respectively. Now, any gauge transformation that preserves the form (\ref{2planes}) of the gravitational field can be thought of as a pair of gauge-transformations that act on $h^\text{I}_{\mu\nu}$ and $h^\text{II}_{\mu\nu}$ separately, preserving their plane-wave forms; thus $\tau_{\mu\nu}^\text{I}$ and $\tau_{\mu\nu}^\text{II}$ must be invariant under gauge-transformations of this type. However, the ``cross-terms'' in (\ref{2planes2}) \emph{are} gauge-dependent, as the (gauge-dependent) longitudinal components of $h^\text{I}_{\mu\nu}$ will be transverse to $h^\text{II}_{\mu\nu}$, and vice versa. This gauge ambiguity is removed, however, by the presence of a physical detector: once we demand that the energy-momentum exchanged with this detector is to equal the monopole-free microaverage, we fix the gauge completely. This is the detector-frame: $h_{\mu\nu}$ is transverse-traceless, with $u^\mu$ identified as the four-velocity of the detector. In this sense, the gauge-fixing program is the procedure that enables us to ``add together'' the energy-momentum tensors of gravitational plane-waves (which, individually, are gauge invariant) to form the energy-momentum tensor of the total field.

For the sake of the concreteness, let us set $k^\text{I}_\mu=(1,1,0,0)$, $k^\text{II}_\mu=(1,0,1,0)$, and $u^\mu=(1,0,0,0)$. Then, once we have transformed $h_{\mu\nu}$ to \TT-gauge, the energy-momentum tensor becomes
\begin{align}\nonumber
\kappa \tau_{\mu\nu}&=  \kappa \tau^\text{I}_{\mu\nu} + \kappa \tau^\text{II}_{\mu\nu} - \tfrac{1}{4}(2 k^\text{I}_{(\mu}k^\text{II}_{\nu)} +  \eta_{\mu\nu} )h^{\text{I}\prime}_+ h^{\text{II}\prime}_+,
\end{align}
where $h^\text{I}_+= h^\text{I}_{22}=-h^\text{I}_{33}$, and  $h^\text{II}_+ = h^\text{II}_{33}=-h^\text{II}_{11}$. Due to the positivity theorem of section \ref{Positivity}, we already know this tensor describes a positive energy-density, and a causal energy flux. As a particular example of this, it is easy to calculate the energy-density associated with $u^\mu$,
\begin{align}\nonumber
\kappa\tau_{00}&= \half\left((h^{\text{I}\prime}_+)^2 +(h^{\text{I}\prime}_\times)^2 +(h^{\text{II}\prime}_+)^2 +(h^{\text{II}\prime}_\times)^2\right) \\&\quad- \tfrac{1}{4}h^{\text{I}\prime}_+h^{\text{II}\prime}_+,
\end{align}
and, as $(h^{\text{I}\prime}_+)^2 + (h^{\text{II}\prime}_+)^2 \ge h^{\text{I}\prime}_+h^{\text{II}\prime}_+/2$, we can confirm that this energy-density can never be negative.

\subsection{Linearised Schwarzschild Spacetime}\label{LinSch}
The Schwarzschild spacetime is the vacuum solution to the Einstein field equations outside any uncharged spherical non-rotating body of mass $M$. At distances much greater than $\kappa M$, where the linear approximation is valid, the gravitational field must  therefore correspond to that of the compact source with $I_{ij}=J_{ij}=0$:
\begin{align}\label{Sch}
\bar{h}_{00}=\frac{\kappa M}{2 \pi r},
\end{align}
and $\bar{h}_{0i}=\bar{h}_{ij}=0$. Obviously, this is an example of a gravitational field that is entirely time-independent; thus, as explained in section \ref{tindep}, there will be no possibility of transforming to transverse-traceless gauge, nor any need to do so.\footnote{Of course, the linearised Schwarzschild spacetime can be represented in other gauges, but no others obey $\partial^\mu \bar{h}_{\mu\nu}=0$ and $\dot{h}_{\mu\nu}=0$ everywhere, and are well-behaved at infinity.} The formula (\ref{tauflat}) for the gravitational energy-momentum tensor yields
\begin{align}\label{Mtau}
\tau_{\mu\nu} = \kappa \left(\frac{M}{8\pi r^2}\right)^2 \left(2\hat{x}_\mu \hat{x}_\nu - \eta_{\mu\nu}\right),
\end{align} 
where $\hat{x}^\mu$ is the radial unit vector.\footnote{This is a trivial extension of the notation $\hat{x}_i=x_i /r$ from appendix \ref{sources}; we simply define $\hat{x}_0=0$.} It is easy to confirm that this energy-momentum tensor is everywhere positive. Any timelike unit vector $v^\mu$ defines a positive gravitational energy-density,
\begin{align}
\varrho \equiv v^\mu\tau_{\mu\nu} v^\nu  = \kappa \left(\frac{M}{8\pi r^2}\right)^2 \left(2(\hat{x}_i v_i)^2 +1 \right)\ge0,
\end{align}
and an energy current-density $J^\nu \equiv v^\mu \tau_{\mu\nu}$ which is nowhere spacelike:
\begin{align}
J^\nu J_\nu = - \kappa^2  \left(\frac{M}{8\pi r^2}\right)^4\le 0.
\end{align}
It is also worth comparing equation (\ref{Mtau}) with the electromagnetic energy-momentum outside a point-charge: $T_{\mu\nu}\sim (g_{\mu\nu}+2u_{\mu}u_{\nu}- 2\hat{x}_\mu \hat{x}_\nu)/r^4$. While both tensors diminish in proportion to $1/r^4$, they define very different stress profiles at each point. The gravitational field has $\tau_{rr}=-\tau_{\theta\theta}=-\tau_{ii} =\tau_{00}\ge0$ and thus describes radial compression, tangential tension, and negative pressure; while the electromagnetic field has $-T_{rr}= T_{\theta\theta}=T_{ii}=T_{00}\ge0$ and thus describes radial tension, tangential compression, and positive pressure. The physical significance of this difference is far from obvious, but may relate to the like-attracts-like character of gravity: the negative gravitational pressure mediating the attraction of other masses, while positive electromagnetic pressure causes the repulsion of like-charges. In addition, it may be possible to understand the radial gravitational compression (and tangential tension) in terms of some ``elastic'' analogy for spacetime, as the Schwarzschild geometry ``squeezes in'' extra radial distance (between spheres of given area) in comparison to flat space. However, the theoretical value of such an analogy is unclear, and we do not intend to develop it any further here.

Although we have focused here on the linearised Schwarzschild spacetime as a particular example of a time-independent field, we note in passing that it is easy to evaluate the gravitational energy-momentum tensor associated with \emph{any} static configuration of matter $T_{\mu\nu}=u_\mu u_\nu \rho(\vec{x})$: equation (\ref{tauflat}) simplifies to 
\begin{align}\label{newton}
\kappa\tau_{\mu\nu}= 2 \partial_\mu \Phi \partial_\nu \Phi - \eta_{\mu\nu}\partial_\alpha \Phi \partial^\alpha \Phi,
\end{align}
where the Newtonian potential $\Phi\equiv -\bar{h}_{00}/4$  is determined by solving Poisson's equation $\partial_i^2 \Phi = \kappa \rho/2$. Equation (\ref{newton}) reveals that the energy-momentum of the Newtonian potential is exactly that of a massless Klein-Gordon scalar field.

\subsection{Gravitational Field of a Compact Source}\label{TTsource}
We shall now calculate the energy-momentum content of the gravitational field (\ref{hpoint}) generated by a compact source.\footnote{This calculation should not be confused with the analysis performed in section \ref{int}, where a test-source (essentially a compact source in the limit $d, M, J_{ij}, I_{ij}\to0$) interacted with an incident field, which presumably had been generated by another source, very far way. Here the compact source  will represent an astrophysical source (with finite $d$, $M$, $J_{ij}$ and $I_{ij}$) and by adopting the source-frame we will compute the energy-momentum of the outgoing field as it would be measured by microaveraged detectors co-moving with the source.} The first step will be to enter the source-frame: we must transform the dynamical part of the outgoing field into \TT-gauge, with $u^\mu$ identified as the four-velocity of the source. We can always make this transformation \emph{locally} by choosing the gauge fields $\xi_{\mu}$ such that $h_{0\mu}=\dot{h}_{0\mu}=0$ and $h=\dot{h}=0$ at some time $t=t_0$; then $\partial^2 \xi_\mu=0$ (which preserves the harmonic condition) and the field equations $\partial^2 h_{\mu\nu}=0$ (outside the source) ensure that $h_{0\mu}=0$ and $h=0$ continues to be true for $t\in (t_0 - r, t_0 + r)$.\footnote{See \citep[chap. 4.4b]{Wald} for details.} This method is problematic in that it is based around an arbitrary special time $t_0$, and that transverse-tracelessness always breaks down within a time $\Delta t = 2r$; these issues prevent us from forming a \emph{global} picture of the energy-momentum outside the source.

As we show in appendix \ref{TTgauge}, these problems can be completely avoided if we weaken the harmonic condition slightly, so that $\partial^\mu \bar{h}_{\mu\nu}=0$ is only enforced \emph{outside the source}. This trick allows us to find a gauge in which the dynamical field is transverse-traceless everywhere outside the source, for all $t$, and does not require us to choose a special time $t_0$. We can think of this gauge as a way of joining up the many possible local gauges (defined using the aforementioned method) in a mutually consistent fashion.\footnote{Presumably, there is some topological obstruction which prevents us from joining these local gauges without violating the harmonic condition at the source. However, provided we do not intend to calculate the energy-momentum transferred between matter and gravity at the source, this is not an issue. Even if we were careful to keep the gravitational field harmonic at the source, more work would be needed to perform such a calculation, as this self-interaction only becomes well-defined by breaking down the compact source into component parts.} The process of transforming the gravitational field of the compact-source (\ref{hpoint}) can be found in appendix \ref{TTgauge}, here we simply display the result:
\begin{align}\label{TThpoint}\bs
\bar{h}_{00}&= \left(2M + \langle I_{ij} \rangle\partial_i\partial_j\right)\frac{\kappa}{4 \pi r}\\
 \bar{h}_{0i} &= J_{ij}\partial_j \frac{\kappa}{4 \pi r}\\
\bar{h}_{ij} &= \int^\infty_{-\infty} \frac{\ud \omega}{2\pi} e^{i\omega t} \Bigg[\left( \tilde{I}_{kl}\delta_{ij} + \tilde{I} \delta_{ik}\delta_{jl} - 4\tilde{I}_{k(i}\delta_{j)l} \right)\partial_k \partial_l  \\
&\quad \ \! \qquad \qquad \qquad- \frac{1}{\omega^2} \tilde{I}_{kl}\partial_k \partial_l \partial_i\partial_j  
\\
&\quad \ \! \qquad \qquad \qquad + \omega^2 \left( \tilde{I} \delta_{ij}-  2\tilde{I}_{ij}\right) \Bigg] \frac{\kappa e^{-i\omega r}}{8\pi r},
\es\end{align}
where we have introduced the notation
\begin{align}
\langle I_{ij} \rangle \equiv \lim_{\Delta \to \infty} \int^\infty_{-\infty} I_{ij}(t) \frac{ e^{-t^2/\Delta^2}}{\sqrt{ \pi \Delta^2}} \ud t,
\end{align}
for the time-average of the quadrupole moment, and
\begin{align}
\tilde{I}_{ij}(\omega)\equiv \int^\infty_{-\infty} e^{- i \omega t}\left(I_{ij}(t)-\langle I_{ij} \rangle \right) \ud t,
\end{align} 
for the Fourier transform of its dynamical part. Notice that the terms proportional to $M$, $J_{ij}$, and $\langle I_{ij}\rangle$ constitute the time-independent mode of the field, and have therefore not been transformed. At this point we can confirm the assertion of section \ref{tindep}, that the time independent field satisfies $\bar{h}_{00}\gg \bar{h}_{0i}\gg \bar{h}_{ij}$. To do so we note that, firstly, there is no time-independent term in $\bar{h}_{ij}$, and secondly, seeing as the radius of the source $d\gtrsim J/M$, and that we are outside the source (which is to say, $r\gg d$, the regime of validity of (\ref{TThpoint})) then we must have $J/r \ll M$.

Having rendered the dynamical field transverse-traceless outside the source, all that remains is to substitute (\ref{TThpoint}) into (\ref{tauflat}) to calculate $\tau_{\mu\nu}$. As was shown in the process of deriving (\ref{N}), the energy-momentum of the time-independent field adds \emph{linearly} (i.e.\ without cross-terms) to that of the dynamical field. Given that we have already investigated the part due to the time-independent field in section \ref{LinSch}, it is generally more interesting to discard this term, and focus on the additional energy-momentum due to the dynamical field. In figure \ref{Frames} we show the results of a computation of this additional gravitational energy-density $\tau_{00}$ outside two monochromatic compact sources: a vibrating rod, and an equal-mass binary.  It goes without saying that the energy-density is everywhere positive, and that the energy current-density is nowhere spacelike.

\begin{figure*}
\centering
\subfigure{\includegraphics[scale=.52]{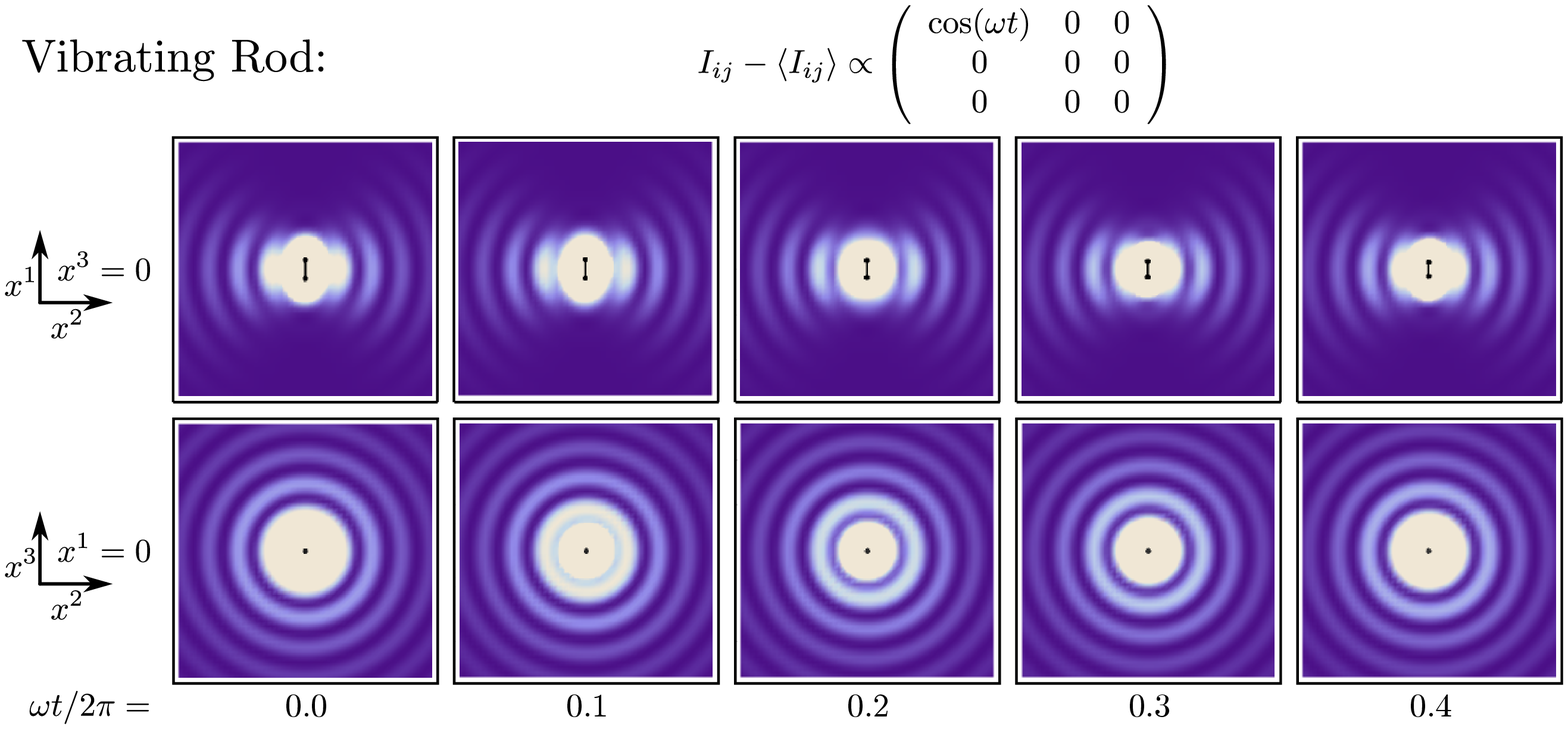}}\vspace{.1in}
\subfigure{\includegraphics[scale=.52]{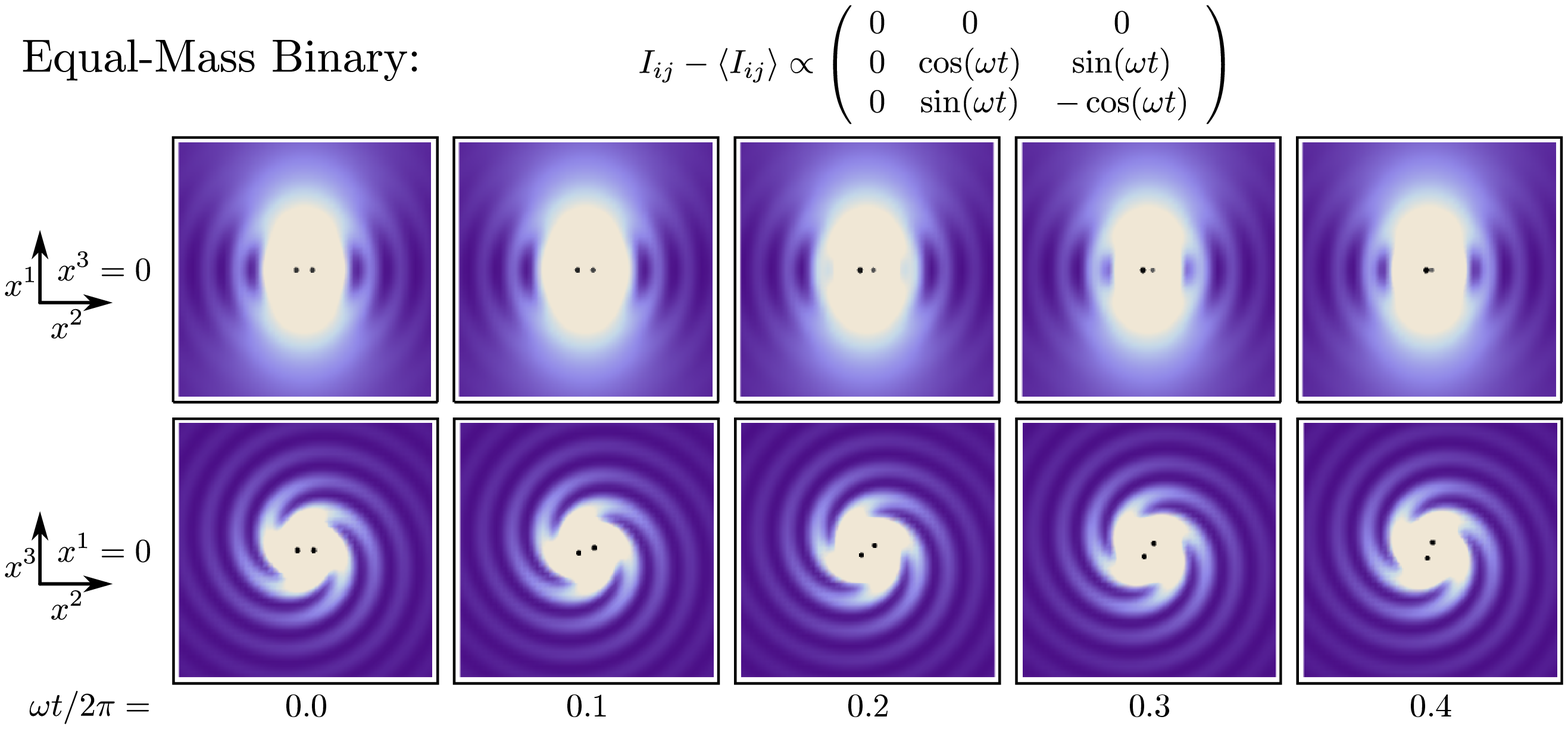}}
\caption{Plots of the energy-density of the dynamical gravitational field outside two monochromatic compact sources: a vibrating rod, and an equal-mass binary. Only half a period is shown, as $\tau_{\mu\nu}$ oscillates with twice the frequency of the source. Although the rod and the binary are much smaller than one wavelength, they have been magnified to illustrate the phase of their motion. The  propagation of gravitational energy is more easily appreciated in the animated versions of these plots, available at \url{www.mrao.cam.ac.uk/~lmb62/animations} .}
\label{Frames}
\end{figure*}

\section{Conclusion}
It is natural to suspect that wherever matter gains energy under the influence of gravity, a corresponding loss in the energy of the gravitational field must have occurred. By constructing a framework to quantify this idea, we have succeeded in localising the energy and momentum of the linear gravitational field, and have shown  this energy to be positive and to not flow faster than light.

The core result of our investigation is the formula (\ref{taubar}) for the gravitational energy-momentum tensor, the \emph{unique} symmetric tensor, quadratic in $\delb_c h_{ab}$, which accounts for the energy-momentum lost or gained by matter through its interaction with gravity (\ref{conservation}). Crucially, a tensor satisfying these conditions only exists in the harmonic gauge (\ref{dedonder}) and thus, as a matter of necessity rather than choice, our framework discards nearly all its gauge freedom. A small set of viable gauge transformations still remain, however, and although these do not alter the energy-momentum of gravitational plane-waves (\S\ref{GIPW}) this invariance does not extend to arbitrary gravitational fields. 

In response to this ambiguity, the monopole-free microaverage  was developed (\S\ref{Micro}); this constitutes a local and fully gauge-invariant description of energy-momentum transfer, and agrees with the intuitive notion that the ``work done'' on a gravitational detector is the product of the force (proper acceleration) and the proper distance through which the force is applied. Of the incident field, only the transverse-traceless part contributes to the microaveraged exchange (\ref{MA=TT}), and thus a natural gauge-fixing program is motivated, based around transverse-traceless gauge (\S \ref{EMTT}). The effect of this program is to prepare the field so that no microaverage is needed, and furthermore, to ensure that energy-momentum is only assigned to those components of the field whose energy-momentum can be measured by a microaveraged detector. Because the positivity property (\S \ref{Positivity}) holds true wherever the field is transverse-traceless, the gauge-fixing procedure also ensures that (for the dynamical field at least) gravitational energy-density is positive, and gravitational energy flux is timelike or null. No longer burdened by gauge ambiguity, the gravitational energy-momentum tensor can be evaluated without difficulty: the energy-momentum of gravitational plane-waves (\ref{plane2}), the linearised Schwarzschild spacetime (\ref{Mtau}), and the gravitational radiation outside compact sources (Fig.\ \ref{Frames}) have been provided as specific examples.

With regards to further investigation, there are two obvious directions in which our framework might be extended: beyond the linear approximation, and beyond the flat background.\footnote{To extend $\tau_{ab}$ beyond the linear regime, one would hope to construct a tensor $t_{ab}$, defined on the physical spacetime $\mc{M}$, such that $\phi^* t_{ab} = \tau_{ab} + O(h^3)$. Clearly, it will only be possible to make this identification if $\tau_{ab}$ is gauge invariant to second order, as $\delta (\phi^* t_{ab})\sim  O(h^2) \partial \xi $ under a change of gauge, whereas $\delta\tau_{ab} \sim  O(h) \partial \xi $ unless it is invariant. Thus, only once $\tau_{ab}$ has been brought into the detector-frame, or the source-frame, can we proceed. Consequently, we should expect that $t_{ab}$ will not only depend on the physical metric $g_{ab}$, but also on the four-velocity of the relevant detector or source.} However, it is currently unknown whether such extensions are possible, or even conceptually sound. On a more practical level, one could apply our formalism to the energetics of actual gravitational detectors, rather than the idealised test-sources that have so far dominated our discussion. In doing so, the framework developed here may benefit the design and analysis of future gravitational-wave experiments.

\begin{acknowledgments}
L.\ M.\ B.\ is supported by STFC and St.\ John's College, Cambridge.
\end{acknowledgments}

\appendix

\section{Sources}\label{sources}
The aim of this Appendix is to derive the formula for $T_{\mu\nu}$ that defines a gravitational point-source (essentially an infinitesimal gravitational quadrupole) and the field $h_{\mu\nu}$ that it generates. The derivation comprises two parts: first, a calculation of the field due to a compact source; second, a calculation of the field due to a candidate $T_{\mu\nu}$ that  vanishes everywhere but at $\vec{x}=0$. As the field from the first calculation matches that of the second (within the region of validity of the compact source approximation) we will be able to conclude that our candidate $T_{\mu\nu}$ is indeed the energy-momentum tensor we sought, that of an infinitesimal compact source.

\subsection{The Compact Source}\label{compact}
A \emph{compact source} is an isolated gravitational body confined to a compact spatial region $\mc{D}$ of radius $d$ much smaller than the wavelength $\lambda$ of the gravitational radiation it emits. Although calculations of the field $h_{\mu\nu}(\vec{x},t)$ outside a compact source are available in many standard references, we present our own here for two reasons. Firstly, textbook treatments commonly conflate the slow-motion approximation ($d\ll \lambda$) with the far-field approximation ($|\vec{x}|\equiv r \gg \lambda$). Here we shall assume only that the source is very small ($d\ll r,\lambda$) but not anything about the ratio of $\lambda$ to $r$.\footnote{As we are working with in the confines of linearised gravity, we should also insist that $d\gg 2\kappa M$, the Schwarzschild radius of the source. However, this will have little bearing on our calculation.} Secondly, the standard approaches frequently omit a full calculation of $\bar{h}_{00}$ and $\bar{h}_{0i}$. Presumably, these components are ignored because they do not appear to contribute to the gravitational field in transverse-traceless gauge; however, they must be included if $h_{\mu\nu}$ is to satisfy the harmonic condition.

The retarded solution to the linearised field equations (\ref{cFEqs}) is given by
\begin{align}\label{greens}
\bar{h}_{\mu\nu}(\vec{x},t) = \frac{\kappa}{2\pi} \int_{\mc{D}} \frac{T_{\mu\nu}(\vec{x}^\prime, t - |\vec{x}-\vec{x}^\prime|)}{|\vec{x}-\vec{x}^\prime|} \ud^3 x^\prime.
\end{align}
We shall proceed by expanding the right-hand side of this equation to second order in the small quantities $d/\lambda$ and $d/r$, so that we have an integral of energy-momentum tensors $T_{\mu\nu}\equiv T_{\mu\nu}({\vec{x}^\prime},t-r)$ evaluated at the \emph{same time} $t^\prime = t-r$. Using
\begin{align}
|\vec{x}-\vec{x}^\prime| &= r\left(1- \frac{\vec{x}\cdot\vec{x}^\prime}{r^2} + \frac{|\vec{x}^\prime|^2}{2r^2} - \frac{(\vec{x}\cdot\vec{x}^\prime)^2}{2r^4} + O((d/r)^3) \right),
\end{align}
equation (\ref{greens}) expands to
\begin{align}\nonumber
\bar{h}_{\mu\nu} &= \frac{\kappa}{2\pi r} \int_{\mc{D}} \! \ud^3 x^\prime \left[ T_{\mu\nu}\left(1+  \frac{\vec{x}\cdot\vec{x}^\prime}{r^2}- \frac{|\vec{x}^\prime|^2}{2r^2} + \frac{3(\vec{x}\cdot\vec{x}^\prime)^2}{2r^4}\right) \right.\\\nonumber
&\quad \qquad \qquad \qquad + r \dot{T}_{\mu\nu}\left( \frac{\vec{x}\cdot\vec{x}^\prime}{r^2} - \frac{|\vec{x}^\prime|^2}{2r^2}  + \frac{3(\vec{x}\cdot\vec{x}^\prime)^2}{2r^4}\right) \\\label{greenex}
&\quad \qquad \qquad \qquad \left.{} + r^2 \ddot{T}_{\mu\nu} \frac{(\vec{x}\cdot\vec{x}^\prime)^2}{2r^4} + O((d/r)^3)  \right].
\end{align}
Although we have not written their arguments, it should be understood that the $T_{\mu\nu}$ terms in the integral are evaluated at $(\vec{x}^\prime,t-r)$, while $\bar{h}_{\mu\nu}$ is evaluated at $(\vec{x},t)$.

In order to relate this integral to the basic physical properties of the source, we define its mass, momentum, and dipole moment by
\begin{align}\bs
M &\equiv \int_\mc{D} T_{00}\ud^3 x^\prime,\\
P_i &\equiv-\int_\mc{D} T_{0i}\ud^3 x^\prime,\\
X_i &\equiv \int_\mc{D} T_{00}  x_i^\prime\ud^3 x^\prime,\es
\end{align}
respectively. Notice that, because the source is entirely contained within $\mc{D}$ (so $T_{\mu\nu}=0$ on the boundary $\partial \mc{D}$) the conservation equation $\partial^\mu T_{\mu\nu}=0$ (the linearised version of (\ref{consT})) leads to the following relations:
\begin{align}\nonumber
\dot{X}_i &= \int_\mc{D}  \partial_0 T_{00}  x^\prime_i \ud^3 x^\prime = \int_\mc{D}  (\partial^\prime_j T_{j0} ) x^\prime_i \ud^3 x^\prime\\&= -\int_\mc{D} T_{j0} (\partial^\prime_j x^\prime_i) \ud^3 x^\prime = P_i,\\
\dot{P}_i &= -\int_\mc{D} \partial^\prime_j T_{ji}\ud^3 x^\prime =0.
\end{align}
Thus $\ddot{X}_i=0$, and we are free to fix $X_i=P_i=0$ by our choice of coordinate system. Note also that $\dot{M}=0$ follows by an identical argument. Next we define the quadrupole moment
\begin{align}
I_{ij} &\equiv \int_\mc{D} T_{00}x^\prime_i x^\prime_j\ud^3 x^\prime,
\end{align}
and then derive
\begin{align}\label{Idot}
\dot{I}_{ij} &= -2 \int_\mc{D}   T_{0(i}x_{j)}^\prime\ud^3 x^\prime,\\\label{Iddot}
\ddot{I}_{ij} &= 2 \int_\mc{D}  T_{ij}\ud^3 x^\prime,
\end{align}
in a similar fashion. Finally we define the angular momentum of the source
\begin{align}
J_{ij} \equiv - 2 \int_\mc{D}   T_{0[i}x_{j]}^\prime\ud^3 x^\prime,
\end{align}
and note that conservation sets $\dot{J}_{ij}=0$. 

Before substituting these definitions and results into (\ref{greenex}), note that equations (\ref{Idot}) and (\ref{Iddot}) indicate that $\int T_{0i}\ud^3 x^\prime \sim \dot{I}/d \sim M d/\lambda$ and $\int T_{ij}\ud^3 x^\prime \sim \ddot{I} \sim M d^2/\lambda^2$; hence the integrals of $T_{0j}$ and $T_{ij}$ already have (respectively) one and two extra factors of $(d/\lambda)$ than the integrals of $T_{00}$. Thus, to second order, $\bar{h}_{ij}$ will include contributions from only the zeroth order quantities multiplying $T_{ij}$ in (\ref{greenex}), and $\bar{h}_{0i}$ will include only first and zeroth order quantities multiplying $T_{0i}$. The final result, accurate to second order in the small quantities $(d/\lambda)$ and $(d/r)$, is therefore
\begin{align}\label{hpoint}
\bs\bar{h}_{00}&= \frac{\kappa}{4\pi}\Bigg(\frac{2M +\ddot{I}_{ij}\hat{x}_i\hat{x}_j}{r} + \frac{3\dot{I}_{ij}\hat{x}_i\hat{x}_j - \dot{I}}{r^2}\\&\quad\quad\quad\ \ + \frac{3I_{ij}\hat{x}_i\hat{x}_j -I}{r^3}\Bigg),\\
\bar{h}_{0i}&=-\frac{\kappa}{4\pi} \left( \frac{\ddot{I}_{ij}\hat{x}_j}{r} + \frac{\dot{I}_{ij}\hat{x}_j}{r^2}  +\frac{J_{ij}\hat{x}_j}{r^2} \right),\\
\bar{h}_{ij}&= \frac{\kappa \ddot{I}_{ij}}{4 \pi r},\es
\end{align}
where $\hat{x}_i= x_i/r$ is the radial unit vector, and all the $I_{ij}$ terms are evaluated at the retarded time $t^\prime= t-r$.
Note that, while the fields $\bar{h}_{00}$ and $\bar{h}_{0i}$ are often omitted from standard calculations, even in the far-field limit ($r\to\infty$), they still contain terms of equal size to $\bar{h}_{ij}$; these are necessary for consistency with the harmonic condition.

We have successfully derived the form of the gravitational field outside a compact source. However, because (\ref{hpoint}) was constructed under the approximation scheme $d\ll r$,  we can only trust these equations at distances much larger than the size of the source. However, we can still ask the following question: what source would produce a field such that (\ref{hpoint}) was valid for all $r$, no matter how small? This is the \emph{point-source} we have been interested in: the limit of the compact source as $d\to 0$. In the next section we present a candidate for the point-source, calculate its gravitational field, and show that this agrees with (\ref{hpoint}) for all $r$.

\subsection{The Point-Source}\label{point}
Consider the following energy-momentum tensor for matter:
\begin{align}\nonumber
T_{00}&= M \delta(\vec{x}) + \half I_{ij} \partial_i \partial_j \delta(\vec{x}),\\\label{Tpoint}
T_{0i}&= \half(\dot{I}_{ij} +J_{ij})\partial_j \delta (\vec{x}),\\\nonumber
T_{ij}&=  \half \ddot{I}_{ij} \delta (\vec{x}),
\end{align}
where $M$, $J_{ij}= J_{[ij]}$ are constants, $I_{ij}=I_{(ij)}(t)$ is independent of $\vec{x}$, and overdots indicate differentiation with respect to $t$. It is easy to check that this distribution obeys $\partial^\mu T_{\mu\nu}=0$. 

We wish to solve the linearised field equations
\begin{align}
\partial^2 \bar{h}_{\mu\nu}&=-2\kappa T_{\mu\nu},
\end{align} 
looking for the retarded solution. Recalling that
\begin{align}\label{fdelta}
\partial^2 \left(f(t-r)/r\right) = - 4 \pi \delta(\vec{x})f(t),
\end{align}
for any twice differentiable function $f(t)$, we see that we can replace $f \rightarrow \kappa \ddot{I}_{ij}/4\pi$ to generate the result
\begin{align}
\bar{h}_{ij}= \frac{\kappa \ddot{I}_{ij}(t-r)}{4 \pi r}.
\end{align}
Also, from equation (\ref{fdelta}), we have
\begin{align}
\partial^2 (\partial_j(f(t-r)/r)) = - 4 \pi f(t) \partial_j \delta(\vec{x}).
\end{align}
Thus, setting $ f \rightarrow \kappa(\dot{I}_{ij}+ J_{ij})/4\pi$
gives
\begin{align}\nonumber
\bar{h}_{0i}&=\frac{\kappa}{4\pi} \partial_j \left(\frac{\dot{I}_{ij}(t-r)+ J_{ij}}{r}\right)\\
&= -\frac{\kappa}{4\pi} \left( \frac{\ddot{I}_{ij}\hat{x}_j}{r} + \frac{\dot{I}_{ij}\hat{x}_j}{r^2}  +\frac{J_{ij}\hat{x}_j}{r^2} \right).
\end{align}
By the same method,
\begin{align}\nonumber
\bar {h}_{00}&= \frac{\kappa M}{2 \pi r} + \frac{\kappa }{4 \pi} \partial_i \partial_j \left(\frac{I_{ij}(t-r)}{r}\right)\\\nonumber
&= \frac{\kappa}{4\pi}\Bigg(\frac{2M +\ddot{I}_{ij}\hat{x}_i\hat{x}_j}{r} + \frac{3\dot{I}_{ij}\hat{x}_i\hat{x}_j - \dot{I}}{r^2}\\&\quad\quad\quad\ \ + \frac{3I_{ij}\hat{x}_i\hat{x}_j -I}{r^3}\Bigg).
\end{align}
Therefore the source (\ref{Tpoint}) generates a gravitational field identical to that of the compact source (\ref{hpoint}), except that these equations are now valid for all $\vec{x}$ (except, possibly, $\vec{x}=0$) not just $r \gg d$. The energy-momentum tensor (\ref{Tpoint}) is the point-source we required and (\ref{hpoint}) the field it generates; the correspondence with the compact source allows us to validate the interpretation of $M$ as the mass, $I_{ij}$ the quadrupole moment, and $J_{ij}$ the angular momentum of the source.

\section{Persistent Transverse-Traceless Gauge}\label{TTgauge}
Here we describe a method by which the dynamical part of the gravitational field outside a compact source (centred at $\vec{x}=0$) may be transformed to a gauge which remains transverse-traceless for all time, everywhere outside the source. This will be achieved by relaxing the harmonic condition slightly, so that $\partial^\mu \bar{h}_{\mu\nu}=0$ only holds outside the source. 

First, a point of notation. The gauge transformation described in this section is only applicable to the dynamical part of the gravitational field $h^\text{dyn}_{\mu\nu}\equiv h_{\mu\nu}- \langle h_{\mu\nu} \rangle$, where $\langle \ldots \rangle$ signifies a time average. Rather than crowd the notation, it will be convenient to assume that $\langle h_{\mu\nu} \rangle =0$, and use $h_{\mu\nu}$ to stand for $h^\text{dyn}_{\mu\nu}$. For the compact source, this amounts to setting $M=J_{ij}=\langle I_{ij}\rangle =0$ in (\ref{hpoint}). At the end of the calculation we will reinsert these time-independent terms to the transformed field without alteration.

The general procedure is as follows. To begin, take the Fourier transform of the dynamical part of the gravitational field:
\begin{align}\label{FT}
\tilde{h}_{\mu\nu}(\omega,\vec{x})\equiv \int^\infty_{-\infty} e^{- i \omega t}h_{\mu\nu}(t,\vec{x})\ud t.
\end{align}
The Fourier transform renders the field equations as
\begin{align}\label{FTFE}
(\omega^2 + \partial_i^2)\tilde{h}_{\mu\nu}(\omega,\vec{x})=0,
\end{align}
everywhere outside the source, i.e.\ for $\vec{x}\ne 0$. The harmonic condition becomes
\begin{align}\label{FTHC}
-i \omega \tilde{\bar{h}}_{0\mu} + \partial_i \tilde{\bar{h}}_{i\mu}=0,
\end{align}
and writing $\tilde{\xi}_{\mu}(\omega,\vec{x})$ for the Fourier transform of $\xi_{\mu}(t,\vec{x})$, the general gauge transformation $\delta h_{\mu\nu}= \partial_{(\mu}\xi_{\nu)}$ takes the form
\begin{align}\nonumber
\delta\tilde{h}_{00}&= i \omega \tilde{\xi}_0,\\
\delta\tilde{h}_{0i}&= \half i \omega \tilde{\xi}_i + \half \partial_i \tilde{\xi}_0,\\\nonumber
\delta\tilde{h}_{ij}&= \partial_{(i}\tilde{\xi}_{j)}.
\end{align} 
To achieve transverse-tracelessness we set
\begin{align}\bs
\tilde{\xi}_0 &= i\omega^{-1} \tilde{h}_{00},\\\label{gaugefix}
\tilde{\xi}_i &= 2 i \omega^{-1}\tilde{h}_{0i} - \omega^{-2} \partial_i \tilde{h}_{00}.\es
\end{align}
From the field equations (\ref{FTFE}) it is clear that this gauge transformation obeys $(\omega^2 +\partial_i^2)\tilde{\xi}_{\mu}=0$ for $\vec{x}\ne 0$, and thus the harmonic condition is preserved outside the source. It is also easy to check that (\ref{gaugefix}) fixes $\delta \tilde{h}_{00}=-\tilde{h}_{00}$ and $\delta \tilde{h}_{0i}=-\tilde{h}_{0i}$, and hence ensures that the transformed field $h^\prime_{\mu\nu}= h_{\mu\nu} + \delta h_{\mu\nu}$ has $h^\prime_{0\mu}=0$ everywhere. Furthermore,
\begin{align}\nonumber
\delta \tilde{h} &= -\delta \tilde{h}_{00} + \delta \tilde{h}_{ii} \\\nonumber&= \tilde{h}_{00} + \partial_i( 2 i\omega^{-1} \tilde{h}_{0i} - \omega^{-2} \partial_i \tilde{h}_{00})\\&=  - \omega^{-2} \partial_i^2 \tilde{h}_{00} - \tilde{h}_{ii},
\end{align}
where, in the last step, we have used the $\mu=0$ component of (\ref{FTHC}). Thus, for $\vec{x}\ne0$, where we may use (\ref{FTFE}), we have
\begin{align}
\delta \tilde{h} = \tilde{h}_{00} - \tilde{h}_{ii} = -\tilde{h},
\end{align}
so that $h^\prime=0$ outside of the source. In summary, the transformed field is
\begin{align}\label{TTtrans}
h^\prime_{ij} = \int^\infty_{-\infty} \frac{\ud \omega}{2\pi} e^{i\omega t}\left(\tilde{h}_{ij} + \frac{2i}{\omega} \partial_{(i}\tilde{h}_{j)0} - \frac{1}{\omega^2}\partial_i \partial_j \tilde{h}_{00} \right),
\end{align}
with all other components zero, and $h^\prime=0$, $\partial^\mu \bar{h}^\prime_{\mu\nu}=0$ everywhere outside the source.\footnote{It should now be clear why this method cannot be applied to the time-independent mode of the field: ill-defined contributions proportional to $\delta(\omega)/\omega$ or $\delta(\omega)/\omega^2$ would appear in the integral on the right-hand side of (\ref{TTtrans}). Even without a delta-function at $\omega=0$, this integral is not unambiguous until we explain how to deform the contour to avoid the poles there. We suggest the contour should dodge into the lower half of the complex plane, as this ensures that $h^\prime_{ij}(t_1)$ is dependent only on $h_{\mu\nu}(t_2)$ for $t_2 \le t_1 $, which is to say, the transformed field does not depend on future values of the untransformed field. Using this ``causal'' contour, we can substitute (\ref{FT}) into (\ref{TTtrans}) and perform the $\omega$ integral, arriving at $h^\prime_{ij} (t,\vec{x})= h_{ij} (t,\vec{x})+ \int^t_{-\infty} \ud t^\prime((t-t^\prime)\partial_i\partial_j h_{00}(t^\prime,\vec{x}) - 2 \partial_{(i}h_{j)0}(t^\prime,\vec{x}))$. In general, this formula is less useful than (\ref{TTtrans}), however it does reveal the asymptotic conditions that the dynamical field must obey for this gauge-transformation to be well-defined:  as $t\to -\infty$, the non-oscillatory modes of $\partial_i\partial_j h_{00}$ and $\partial_{(i} h_{j)0}$ must vanish faster than $t^{-2}$ and  $t^{-1}$ respectively.}

We are now in a position to apply this procedure to the gravitational field of the compact source (\ref{hpoint}). Before doing so, however, it is worth mentioning that the technique just described is not limited to compact sources. In generalising, the only adjustment needed is that (\ref{FTFE}) will only hold at $\vec{x}$ such that $T_{\mu\nu}(t,\vec{x})=0$ for all $t$. Figure \ref{TTdiagram} illustrates the difference between this technique and the standard method mentioned in section \ref{TTsource}.

\begin{figure}
\centering
\includegraphics[scale=.45]{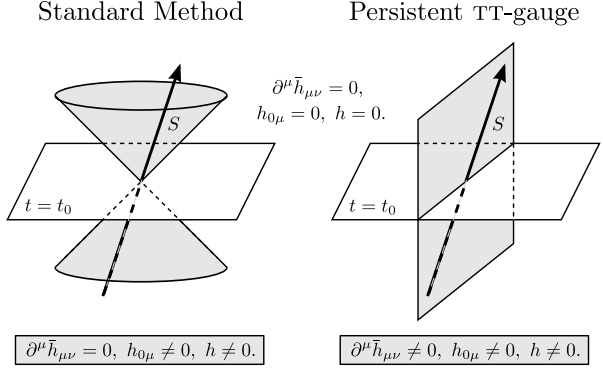}
\caption{Comparison of the standard method for achieving transverse-traceless gauge in the vicinity of a source \citep[chap. 4.4b]{Wald}, and the ``persistent'' method described here. In the two diagrams, $S$ represents an arbitrary source (a region with $T_{\mu\nu}\ne0$) moving relative to $u^\mu$. The hypersurface $t=t_0$ used to define the gauge in the standard method is also shown, but plays no role in our method.}
\label{TTdiagram}
\end{figure}

Continuing with the compact source, we write the dynamical part of (\ref{hpoint}) as
\begin{align}\bs
\bar{h}_{00} &= \partial_i \partial_j (\kappa I_{ij}(t-r)/4\pi r),\\
\bar{h}_{0i} &= \partial_j (\kappa \dot{I}_{ij}(t-r)/4\pi r),\\
\bar{h}_{ij} &=\kappa \ddot{I}_{ij}(t-r)/4\pi r,\es
\end{align}
and take the Fourier transform:
\begin{align}\bs
\tilde{\bar{h}}_{00} &= \tilde{I}_{ij} \partial_i \partial_j (\kappa e^{-i\omega r}/4\pi r),\\
\tilde{\bar{h}}_{0i} &= i \omega \tilde{I}_{ij}\partial_j (\kappa e^{-i\omega r}/4\pi r),\\
\tilde{\bar{h}}_{ij} &= - \omega^2 \tilde{I}_{ij} \kappa e^{-i\omega r}/4\pi r,\es
\end{align}
where $\tilde{I}_{ij}$ is the Fourier transform of the dynamical part of the quadrupole moment. Substituting this into (\ref{TTtrans}) yields
\begin{align}\nonumber
h^\prime_{ij} &= \int^\infty_{-\infty} \frac{\ud \omega}{2\pi} e^{i\omega t} \Bigg[\left( \tilde{I}_{kl}\delta_{ij} + \tilde{I} \delta_{ik}\delta_{jl} - 4\tilde{I}_{k(i}\delta_{j)l} \right)\partial_k \partial_l  \\\nonumber
&\quad \ \! \qquad \qquad \qquad- \frac{1}{\omega^2} \tilde{I}_{kl}\partial_k \partial_l \partial_i\partial_j  
\\
&\quad \ \! \qquad \qquad \qquad + \omega^2 \left( \tilde{I} \delta_{ij}-  2\tilde{I}_{ij}\right) \Bigg] \frac{\kappa e^{-i\omega r}}{8\pi r}.
\end{align}
Finally we recall that $h^\prime_{\mu\nu} = \bar{h}^\prime_{\mu\nu}$ (for $\vec{x}\ne 0$) and reinsert the time-independent mode
\begin{align}\nonumber
\langle \bar{h}_{00} \rangle &= \left(2M + \langle I_{ij} \rangle\partial_i\partial_j\right)\frac{\kappa}{4 \pi r},\\
\langle \bar{h}_{0i} \rangle &= J_{ij}\partial_j \frac{\kappa}{4 \pi r},\\\nonumber
\langle \bar{h}_{ij} \rangle &= 0,
\end{align}
to confirm equation (\ref{TThpoint}).

\newpage

\bibliography{LE}
\end{document}